\newcommand{\va}{\boldsymbol{a}}
\newcommand{\vr}{\boldsymbol{r}}
\newcommand{\vx}{\boldsymbol{x}}
\renewcommand{\Im}{\mathrm{Im}}
\renewcommand{\Re}{\mathrm{Re}}
\newtheorem{conj}[theorem]{Conjecture}
\newcommand{\rats}{\mathbb{Q}}
\newcommand{\defn}{:=}
\newtheorem{proposition}[theorem]{Proposition}
\title{On the Skolem Problem for Continuous Linear Dynamical Systems}
\titlerunning{On the Skolem Problem for Continuous Linear Dynamical Systems} 
\author[1]{Ventsislav Chonev}
\author[2]{Jo\"{e}l Ouaknine}
\author[2]{James Worrell}
\affil[1]{Institute of Science and Technology, Austria}
\affil[2]{University of Oxford, UK}
\authorrunning{V. Chonev and J. Ouaknine and J. Worrell}
\subjclass{F.1.1 Models of Computation, F.2.1 Numerical Algorithms and Problems}
\keywords{Differential equations, Reachability,
Baker's theorem, Schanuel's Conjecture, Semi-algebraic sets}
\begin{document}

\maketitle

\begin{abstract}
  The Continuous Skolem Problem asks whether a real-valued function
  satisfying a linear differential equation has a zero in a given
  interval of real numbers.  This is a fundamental reachability
  problem for continuous linear dynamical systems, such as linear
  hybrid automata and continuous-time Markov chains.  Decidability of
  the problem is currently open---indeed decidability is open even for
  the sub-problem in which a zero is sought in a bounded interval.  In
  this paper we show decidability of the bounded problem subject to
  Schanuel's Conjecture, a unifying conjecture in transcendental
  number theory.  We furthermore analyse the unbounded problem in
  terms of the frequencies of the differential equation, that is, the
  imaginary parts of the characteristic roots.  We show that the
  unbounded problem can be reduced to the bounded problem if there is
  at most one rationally linearly independent frequency, or if
  there are two rationally linearly independent frequencies and all
  characteristic roots are simple.  We complete the picture by showing
  that decidability of the unbounded problem in the case of two (or
  more) rationally linearly independent frequencies would entail a
  major new effectiveness result in Diophantine approximation, namely
  computability of the Diophantine-approximation types of all real
  algebraic numbers.
\end{abstract}

\section{Introduction}
The Continuous Skolem Problem is a fundamental decision problem
concerning reachability in continuous-time linear dynamical systems.
The problem asks whether a real-valued function satisfying an ordinary
linear differential equation has a zero in a given interval of real
numbers.  More precisely, an instance of the problem comprises an
interval $I\subseteq \mathbb{R}_{\geq 0}$ with rational endpoints,
an ordinary differential equation
\begin{gather} 
f^{(n)} + a_{n-1} f^{(n-1)} + \ldots + a_0 f = 0 
\label{eq:ode}
\end{gather}
whose coefficients are real algebraic, together with initial
conditions $f(0),\ldots,f^{(n-1)}(0)$ that are also real algebraic
numbers.  Writing $f:\mathbb{R}_{\geq 0} \rightarrow \mathbb{R}$ for
the unique solution of the differential equation subject to the
initial conditions, the question is whether there exists $t\in I$ such
that $f(t)=0$.  Decidability of this problem is currently open.
Decidability of the sub-problem in which the interval $I$ is bounded,
called the Bounded Continuous Skolem Problem, is also open~\cite[Open
Problem 17]{pisot}.

The nomenclature \emph{Continuous Skolem Problem} is based on an
analogy with the Skolem Problem for linear recurrence sequences, which
asks whether a given linear recurrence sequence has a zero
term~\cite{TUCS05}.  Whether the latter problem is decidable is an
outstanding question in number theory and theoretical computer
science; see, e.g., the exposition of Tao~\cite[Section 3.9]{Tao08}.

The continuous dynamics of linear hybrid automata and the evolution of
continuous-time Markov chains, amongst many other examples, are
determined by linear differential equations of the form
$\boldsymbol{x}'(t)=A\boldsymbol{x}(t)$, where
$\boldsymbol{x}(t) \in \mathbb{R}^n$ and $A$ is an $n\times n$ matrix
of real numbers~\cite{Alur15}.  A basic reachability question in this
context is whether, starting from an initial state
$\boldsymbol{x}(0)$, the system reaches a given hyperplane
$\{ \boldsymbol{y} \in \mathbb{R}^n : \boldsymbol{u}^T\boldsymbol{y} =
0 \}$ with normal vector $\boldsymbol{u}\in\mathbb{R}^n$.  For
example, one can ask whether the continuous flow of a hybrid automaton
leads to a particular transition guard being
satisfied or an invariant being violated.  Now the function
$f(t) = \boldsymbol{u}^T \boldsymbol{x}(t)$ satisfies a linear
differential equation of the form (\ref{eq:ode}), and it turns out
that the hyperplane reachability problem is inter-reducible with the
Continuous Skolem Problem (see~\cite[Theorem 6]{pisot} for further
details).  Moreover, under this reduction the Bounded Continuous
Skolem Problem corresponds to a time-bounded version of the hyperplane
reachability problem.

The \emph{characteristic polynomial} of the differential equation
(\ref{eq:ode}) is
\[\chi(x) := x^n + a_{n-1} x^{n-1} + \ldots + a_0 \, .\]
Let $\lambda_1,\ldots,\lambda_m$ be the distinct roots of $\chi$.  Any
solution of (\ref{eq:ode}) has the form $f(t)=\sum_{j=1}^m P_j(t)
e^{\lambda_j t}$, where the $P_j$ are polynomials with algebraic
coefficients that are determined by the initial conditions of the
differential equation.  We call a function $f$ in this form an
\emph{exponential polynomial}.  If the roots of $\chi$ are all simple
then $f$ can be written as an exponential polynomial in which
the polynomials $P_j$ are all constant.

The Continuous Skolem Problem can equivalently be formulated in terms
of whether an exponential polynomial has a zero in a given interval of
reals.  If the characteristic roots have the
form $\lambda_j=r_j+i\omega_j$, where $r_j,\omega_j\in\mathbb{R}$,
then we can also write $f(t)$ in the form
$f(t)=\sum_{j=1}^m e^{r_jt}(Q_{1,j}(t)\sin(\omega_j t) +
Q_{2,j}(t)\cos(\omega_jt))$, where the polynomials $Q_{1,j},Q_{2,j}$
have real algebraic coefficients.  We call $\omega_1,\ldots,\omega_m$
the \emph{frequencies} of $f$.

Our first result is to show decidability of the Bounded Continuous
Skolem Problem subject to Schanuel's Conjecture, a unifying conjecture
in transcendental number theory that plays a key role in the study of
the exponential function on both real and complex
numbers~\cite{Zilber02,Zilber05}.  Intuitively, decidability of the
Bounded Continuous Skolem Problem is non-trivial because an
exponential polynomial can approach $0$ tangentially. 
Assuming Schanuel's Conjecture, we show that any exponential
polynomial admits a factorisation such that the zeros of each factor
can be detected using finite-precision numerical computations.  Our
method, however, does not enable us to bound the precision required to
find zeros, so we do not obtain a complexity bound for the procedure.

A celebrated paper of Macintyre and Wilkie~\cite{macintyreWilkie}
obtains decidability of the first-order theory of
$\mathbb{R}_{\mathrm{exp}}=(\mathbb{R},0,1,<,\,\cdot\,,+,\exp)$
assuming Schanuel's Conjecture over $\mathbb{R}$.  The proof
of~\cite[Theorem 3.1]{TuringSchanuel} mentions an unpublished result
of Macintyre and Wilkie that generalises~\cite{macintyreWilkie} to
obtain decidability when $\mathbb{R}_{\mathrm{exp}}$ is augmented with
the restricted functions ${\sin}\!\restriction_{[0,2\pi]}$ and
${\cos}\!\restriction_{[0,2\pi]}$, this time assuming Schanuel's
Conjecture over $\mathbb{C}$.  This result immediately implies
(conditional) decidability of the Bounded Continuous Skolem Problem.
However, decidability of latter problem is simpler and, as we show
below, can be established more directly.

In the unbounded case we analyse exponential polynomials in terms of
the number of rationally linearly independent frequencies.  We show
that the unbounded problem can be reduced to the bounded problem if
there is at most one rationally linearly independent frequency, or if
there are two rationally linearly independent frequencies and all
characteristic roots are simple.  These two reductions are
unconditional and rely on the cell decomposition theorem for
semi-algebraic sets~\cite{BasuPR06} and Baker's Theorem on linear
forms in logarithms of algebraic numbers~\cite{baker}.

We complete the picture by showing that decidability of the unbounded
problem in the case of two (or more) rationally linearly independent
frequencies would entail a major new effectiveness result in
Diophantine approximation---namely computability of the
Diophantine-approximation types of all real algebraic numbers.  As we
discuss in Appendix~\ref{sec:hardness}, currently essentially nothing is known
about Diophantine-approximation types of algebraic numbers of degree
three or higher, and they are the subject of several longstanding open
problems.

The question of deciding whether an exponential polynomial $f$ has
infinitely many zeros is investigated in~\cite{ChonevOW16}.  There the
problem is shown to be decidable if $f$ satisfies a differential
equation of order at most $7$.  This result does not rely on
Schanuel's Conjecture.  It is also shown in~\cite{ChonevOW16} that,
analogously with the Continuous Skolem Problem, decidability of the
Infinite Zeros Problem in the general case would entail significant
new effectiveness results in Diophantine approximation.

\section{Mathematical Background}
\label{sec:Bounded}
\subsection{Zero Finding}
\label{sec:zero-finding}
Let $f:[a,b]\rightarrow \mathbb{R}$ be a function defined on a closed
interval of reals with endpoints $a,b\in\mathbb{Q}$.  Suppose the
following two conditions hold: (i)~there exists $M>0$ such that $f$ is
$M$-Lipschitz, i.e., $|f(s)-f(t)|\leq M|s-t|$ for all $s,t \in [a,b]$;
(ii)~given $t\in [a,b]\cap \mathbb{Q}$ and positive error bound
$\varepsilon\in \mathbb{Q}$, we can compute $q\in\mathbb{Q}$ such that
$|f(t)-q|<\varepsilon$.  Then
given a positive rational number $\delta$ we can compute
piecewise linear functions $f_\delta^+,f_\delta^{-} : [a,b] \rightarrow \mathbb{R}$
such that $f_\delta^{-}(t) \leq f(t) \leq f_\delta^{+}(t)$ and $f_\delta^+(t)-f_\delta^{-}(t)\leq \delta$
for all $t \in [a,b]$.  We do this as follows:
\begin{enumerate}
\item
Pick $N\in\mathbb{N}$ such that $\frac{1}{N} < \frac{\delta}{4(b-a)M}$ and
consider sample points
$s_j := a + \frac{(b-a)j}{N}$, $j=0,\ldots,N$, dividing the
interval $[a,b]$ into $N$ sub-intervals, each of length at most
$\frac{\delta}{4M}$.  
\item For each sample point $s_j$ compute $q_j\in\mathbb{Q}$
  such that $|q_j - f(s_j)| < \frac{\delta}{4}$, define
  $f_\delta^{-}(s_j) = q_j - \frac{\delta}{2}$,
  $f_\delta^{+}(s_j) = q_j + \frac{\delta}{2}$, and extend $f_\delta^{-}$ and
  $f_\delta^{+}$ linearly between sample points.
\end{enumerate}
Note that the Lipschitz condition on $f$ ensures that
$f_\delta^{-} \leq f \leq f_\delta^{+}$.

Now suppose that $f$ satisfies the following additional conditions:
(iii)~$f(a)\neq 0,f(b) \neq 0$; (iv)~for any $t\in(a,b)$ such that $f(t)=0$,
$f'(t)$ exists and is non-zero, i.e., $f$ has no tangential zeros.
Then we can decide the existence of a zero of $f$ by computing upper
and lower approximations $f_\delta^{+}$ and $f_\delta^{-}$ for successively smaller
values of $\delta$.  If $f_\delta^{+}(t)<0$ for all $t$ or $f_\delta^{-}(t)>0$
for all $t$ then we conclude that $f$ has no zero on $[a,b]$; if
$f_\delta^{+}(s)<0$ and $f_\delta^{-}(t)>0$ for some $s,t$ then we conclude that $f$
has a zero; otherwise we proceed to a smaller value of $\delta$.
This procedure terminates since by (iii) and (iv) either $f$ has a
zero in $[a,b]$ or it is bounded away from zero.

\subsection{Number-Theoretic Algorithms}
\label{sec:number-theory}
For the purposes of establishing decidability, we can assume that an
instance of the Continuous Skolem Problem is a real-valued exponential
polynomial $f(t)=\sum_{j=1}^m P_j(t) e^{\lambda_j t}$, where
$\lambda_1,\ldots,\lambda_m$ and the coefficients of the polynomials
$P_1,\ldots,P_m$ are algebraic, see~\cite[Theorem 6]{pisot}.

For computational purposes we represent an algebraic number $\alpha$
by a polynomial $P$ with rational coefficients such that
$P(\alpha)=0$, together with a numerical approximation $p+qi$, where
$p,q\in \mathbb{Q}$, of sufficient accuracy to distinguish $\alpha$
from the other roots of $P$~\cite[Section 4.2.1]{Coh93}.  Given this
representation we can obtain approximations of $\alpha$ to within an
arbitrarily small additive error.

Let $K$ be the extension field of $\mathbb{Q}$ generated by
$\lambda_1,\ldots,\lambda_m$ and the coefficients of the polynomials
$P_1,\ldots,P_m$.  Note that $K$ is closed under complex conjugation.
We can compute a primitive element of $K$, that is,
an algebraic number $\theta$ such that $K=\mathbb{Q}(\theta)$,
together with a representation of each characteristic root $\lambda_j$
as a polynomial in $\theta$ with rational coefficients
(see~\cite[Section 4.5]{Coh93}).  From the representation
of $\lambda_1,\ldots,\lambda_m$ as elements of $\mathbb{Q}(\theta)$,
it is straightforward to determine maximal $\mathbb{Q}$-linearly
independent subsets of $\{ \Re(\lambda_j) : 1 \leq j \leq m \}$ and
$\{ \Im(\lambda_j) : 1 \leq j \leq m \}$ (see~\cite[Section
1]{Just89}).

Let $\log$ denote the branch of the complex logarithm defined by
$\log(re^{i\theta}) = \log(r) + i\theta$ for a positive real number
$r$ and $0 \leq \theta < 2\pi$.  Recall that one can compute $\log z$
and $e^{z}$ to within arbitrarily small additive error given a
sufficiently precise approximation of $z$~\cite{Brent76}.

\subsection{Laurent Polynomials}
\label{sec:laurent}
Let $K$ be a sub-field of $\mathbb{C}$ that has finite dimension over
$\mathbb{Q}$ and is closed under complex conjugation.  Fix
non-negative integers $r$ and $s$, and consider a single variable $x$
and tuples of variables
$\boldsymbol{y}= \langle y_1,\ldots,y_r \rangle$ and
$\boldsymbol{z}= \langle z_1,\ldots,z_s \rangle$.  Consider the ring
of Laurent polynomials
\[\mathcal{R}:=
  K[x,y_1,y_1^{-1},\ldots,y_r,y_r^{-1},z_1,z_1^{-1},\ldots,
  z_s,z_s^{-1}] \, , \] which can be seen as a localisation\footnote{
  Recall that the \emph{localisation} of a commutative ring
  $\mathcal{U}$ in a multiplicatively closed subset $S$ such that
  $0_{\mathcal{U}}\not\in S$ is the ring of formal fractions
  $\mathcal{U}_S = \{ a/s : a\in\mathcal{U}, s\in S \}$, with addition
  and multiplication defined as usual.} of the polynomial ring
$\mathcal{A}:=K[x,y_1,\ldots,y_r,z_1,\ldots,z_s]$ in the
multiplicative set generated by the set of variables
$\{y_1,\ldots,y_r\} \cup \{z_1,\ldots,z_s\}$.  The multiplicative
units of $\mathcal{R}$ are the non-zero monomials in variables
$y_1,\ldots,y_r$ and $z_1,\ldots,z_s$.  As the localisation of a
unique factorisation domain, $\mathcal{R}$ is itself a unique
factorisation domain~\cite[Theorem 10.3.7]{Cohn02}.  From the proof of
this fact it moreover easily follows that $\mathcal{R}$ inherits from
$\mathcal{A}$ computability of factorisation into irreducibles (e.g.,
using the algorithm of~\cite{Lenstra87}).

We extend the operation of complex conjugation to a ring automorphism 
of $\mathcal{R}$ as follows.  Given a polynomial 
\[ P = \sum_{j=1}^n a_j x^{u_j} {y_1}^{v_{j1}} \ldots {y_r}^{v_{jr}}
{z_1}^{w_{j1}}\ldots {z_s}^{w_{js}} \, , \]
where $a_1,\ldots,a_n \in K$, define its conjugate to be
\[ \overline{P} := \sum_{j=1}^n \overline{a_j} x^{u_j} {y_1}^{v_{j1}}
  \ldots {y_r}^{v_{jr}} {z_1}^{-w_{j1}}\ldots {z_s}^{-w_{js}} \, . \]
This definition is motivated by thinking of the variables $x$ and
$y_1,\ldots,y_r$ as real-valued and the variables $z_1,\ldots,z_s$ as
taking values in the unit circle in the complex plane.

We will need the following proposition characterising those
polynomials in $P\in \mathcal{R}$ such that $P$ and $\overline{P}$ are
associates, i.e., such that $\overline{P}$ is equal to the product of
$P$ by a monomial.
Here we use pointwise notation for exponentiation:
given a tuple of integers
$\boldsymbol{u}=\langle u_1,\ldots,u_s \rangle$, we write
$\boldsymbol{z}^{\boldsymbol{u}}$ for the monomial
$z_1^{u_1}\ldots z_s^{u_s}$.

\begin{proposition}
  Let $P \in \mathcal{R}$ be such that
  $P=\boldsymbol{z}^{\boldsymbol{u}}\overline{P}$ for
  $\boldsymbol{u}\in\mathbb{Z}^s$.  Then either (i)~$P$ has the form
  $P=\boldsymbol{z}^{\boldsymbol{u}}Q$ for some $Q\in\mathcal{R}$ with
  $Q=\overline{Q}$, or (ii)~there exists $Q\in\mathcal{R}$ such that
  $P=Q+\boldsymbol{z}^{\boldsymbol{u}}\overline{Q}$ and $P$ does not
  divide $Q$ in $\mathcal{R}$.
\label{prop:three}
\end{proposition}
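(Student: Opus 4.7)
My plan is to exploit the involution $\sigma \colon \mathcal{R} \to \mathcal{R}$ given by $\sigma(f) := \boldsymbol{z}^{\boldsymbol{u}}\overline{f}$; indeed $\sigma^2 = \mathrm{id}$ since $\overline{\boldsymbol{z}^{\boldsymbol{u}}} = \boldsymbol{z}^{-\boldsymbol{u}}$, and the hypothesis is precisely $\sigma(P) = P$. The map $\sigma$ permutes monomials of $\mathcal{R}$ via $T(m) := \boldsymbol{z}^{\boldsymbol{u}}\overline{m}$, so the hypothesis translates into: whenever a monomial $m$ appears in $P$ with coefficient $a_m$, the monomial $T(m)$ appears with coefficient $\overline{a_m}$. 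A monomial $m = x^{u_0} \boldsymbol{y}^{\boldsymbol{v}} \boldsymbol{z}^{\boldsymbol{w}}$ is $T$-fixed exactly when $\boldsymbol{w} = \boldsymbol{u}/2$, which is possible only when $\boldsymbol{u}$ is coordinate-wise even; in that case the corresponding coefficient is automatically real.

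If $\boldsymbol{u} = 0$ then $\sigma$ coincides with the conjugation of $\mathcal{R}$, so $P = \overline{P}$ and case~(i) holds with $Q := P$. Otherwise, I would split on whether $P$ has a non-$T$-fixed monomial, and in each sub-case construct a $Q$ verifying~(ii). In the first sub-case (some non-fixed monomial), I pick a set $S$ of representatives for the $2$-element $T$-orbits in the support of $P$ and set
\[
Q := \sum_{m \in S} a_m m + \tfrac{1}{2} \sum_{m \in F} a_m m,
\]
where $F$ is the set of $T$-fixed monomials of $P$; the identity $a_{T(m)} = \overline{a_m}$ gives $Q + \sigma(Q) = P$ by direct expansion. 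In the second sub-case (all monomials fixed), $\boldsymbol{u}$ is necessarily even and $P = \boldsymbol{z}^{\boldsymbol{u}/2} R$ for some $R \in K[x, \boldsymbol{y}^{\pm 1}]$ with $R = \overline{R}$; I then set $Q := \tfrac{1}{2} P + (1 - \boldsymbol{z}^{\boldsymbol{u}})$, and a short computation again gives $Q + \sigma(Q) = P$.

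The hardest part will be to verify the remaining condition $P \nmid Q$. Suppose for contradiction that $Q = P g$ for some $g \in \mathcal{R}$. Using $\sigma(P) = P$ one obtains $\sigma(Q) = P \overline{g}$, and the equation $P = Q + \sigma(Q) = P(g + \overline{g})$ forces $g + \overline{g} = 1$ in $\mathcal{R}$. In the first sub-case, the Minkowski-sum identity $\mathrm{New}(Pg) = \mathrm{New}(P) + \mathrm{New}(g)$ (valid because $\mathcal{R}$ is an integral domain over a characteristic-zero field) combined with $\mathrm{supp}(Q) \subseteq \mathrm{supp}(P)$ forces $\mathrm{New}(g)$ to consist of a single point at the origin, so $g$ is a nonzero constant; but then $Q = g P$ would have the same support as $P$, contradicting $\mathrm{supp}(Q) = S \cup F \subsetneq \mathrm{supp}(P) = S \cup T(S) \cup F$ (the set $T(S)$ is non-empty precisely because we are in this sub-case). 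In the second sub-case, $P \mid Q$ together with $P \mid \tfrac{1}{2}P$ yields $P \mid (1 - \boldsymbol{z}^{\boldsymbol{u}})$, and since $\boldsymbol{z}^{\boldsymbol{u}/2}$ is a unit this reduces to $R \mid (1 - \boldsymbol{z}^{\boldsymbol{u}})$ inside $K[x, \boldsymbol{y}^{\pm 1}][\boldsymbol{z}^{\pm 1}]$; matching $\boldsymbol{z}$-coefficients in that graded ring forces $R \mid 1$ in $K[x, \boldsymbol{y}^{\pm 1}]$, making $R$ and hence $P$ a unit~--- a degenerate case we may set aside, since the proposition will be applied to non-unit (in fact irreducible) factors in the sequel.
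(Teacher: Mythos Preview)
Your argument is correct and, in the principal case (some monomial not fixed by $T$), it coincides with the paper's: both split the support of $P$ into $T$-fixed monomials and $2$-orbits, take $Q$ to be one half of the fixed part plus a set of orbit representatives, and then argue $P\nmid Q$ from the containment $\mathrm{supp}(Q)\subsetneq\mathrm{supp}(P)$. Your Newton-polytope formulation of the non-divisibility step is equivalent to the paper's variable-by-variable degree comparison (the paper notes that every variable and its inverse has no higher degree in $Q$ than in $P$, so any relation $Q=Pg$ forces $g$ constant, which the strict support inclusion then excludes). The intermediate identity $g+\overline{g}=1$ that you derive is not actually used and can be dropped.

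The one genuine divergence is your handling of the sub-case ``$\boldsymbol u\neq 0$ and all monomials $T$-fixed''. The paper simply declares alternative~(i) there; in fact its proof yields $P=\boldsymbol z^{\boldsymbol u/2}Q$ with $Q=\overline{Q}$, so the exponent in the statement of~(i) should really be $\boldsymbol u/2$ rather than $\boldsymbol u$. You instead stay within alternative~(ii) by taking $Q=\tfrac12 P+(1-\boldsymbol z^{\boldsymbol u})$, which is a nice trick and works exactly when $P$ is not a unit, the caveat you correctly flag. Since the proposition is only ever invoked for irreducible $P$, this exception is harmless; your route is thus slightly more faithful to the literal statement, while the paper's route is shorter once the exponent typo in~(i) is corrected.
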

\begin{proof}
Consider a monomial $M$ such that $\boldsymbol{z}^{\boldsymbol{u}}\overline{M}=M$.  Then $M$ has a
  real coefficient and the exponent $\boldsymbol{w}$ of
  $\boldsymbol{z}$ in $M$ satisfies $2\boldsymbol{w}=\boldsymbol{u}$.
  Thus if $\boldsymbol{z}^{\boldsymbol{u}}\overline{M}=M$ for every monomial $M$ appearing in $P$
  then $P$ has the form $Q\boldsymbol{z}^{\boldsymbol{w}}$, where
  $2\boldsymbol{w}=\boldsymbol{u}$ and $Q$ is a polynomial in the
  variables $x$ and $\boldsymbol{y}$ with real coefficients.  In
  particular $Q=\overline{Q}$, and statement~(i) of the proposition
  applies.
  
  Suppose now that $\boldsymbol{z}^{\boldsymbol{u}}\overline{M}\neq M$ for some monomial $M$ appearing
  in $P$.  Then the map sending $M$ to $\boldsymbol{z}^{\boldsymbol{u}}\overline{M}$ induces a
  permutation of order 2 on the monomials on $P$.  Thus we may write
  $P=\sum_{j=1}^{n} M_j$, where $n=k+2\ell$ for some $k\geq 0$ and
  $\ell\geq 1$ such that $\boldsymbol{z}^{\boldsymbol{u}}\overline{M_j}=M_j$ for $1 \leq j \leq k$
  and $\boldsymbol{z}^{\boldsymbol{u}}\overline{M_j}=M_{j+\ell}$ for $k+1 \leq j \leq \ell$.  Then,
  writing
  $Q:=\frac{1}{2}\sum_{j=1}^k M_j + \sum_{j=k+1}^{k+\ell} M_j$, we
  have $P=Q+\boldsymbol{z}^{\boldsymbol{u}}\overline{Q}$.  

  The set of monomials appearing in $Q$ is a proper subset of the set
  of monomials appearing in $P$ (up to constant coefficients) and so
  $Q$ cannot be a constant multiple of $P$.  It also follows that for
  each variable
  $\sigma \in \{x,y_j,z_k : 1 \leq j \leq r, 1 \leq k \leq s\}$, the
  maximum degree of $\sigma$ in $P$ is at least its maximum degree in
  $Q$, and likewise for $\sigma^{-1}$.  Thus $Q$ cannot be a multiple
  of $P$ by a non-constant polynomial either.  We conclude that $P$
  does not divide $Q$.
\end{proof}

\subsection{Transcendence Theory}
We will use transcendence theory in our analysis of both the bounded
and unbounded variants of the Continuous Skolem Problem.  In the
unbounded case we will use the following classical result.
\begin{theorem}[Gelfond-Schneider]
  Let $a,b$ be algebraic numbers not equal to $0$ or $1$.  Then for
  any branch of the logarithm function, $\frac{\log(b)}{\log(a)}$ is
  either rational or transcendental.
\label{thm:GS}
\end{theorem}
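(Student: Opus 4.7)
The plan is to reduce the statement to the classical form of the Gelfond--Schneider theorem---namely that $a^c$ is transcendental whenever $a$ is algebraic with $a \notin \{0,1\}$ and $c$ is an algebraic irrational---and then outline the auxiliary function argument of Gelfond and Schneider. Setting $\lambda = \log(a)$ and $c = \log(b)/\log(a)$, so that $a = e^{\lambda}$ and $b = e^{c\lambda}$, I suppose for contradiction that $c$ is algebraic but irrational. Then both $a$ and $b = a^{c}$ lie in the number field $K = \mathbb{Q}(a,b,c)$, and the goal is to derive a contradiction.

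The heart of the proof is the construction of an auxiliary entire function
\[
F(z) = \sum_{m=0}^{L-1} \sum_{n=0}^{L-1} p(m,n)\, e^{(m+cn)\lambda z},
\]
whose coefficients $p(m,n)$ are rational integers to be determined. For each positive integer $s$, the value $F(s) = \sum_{m,n} p(m,n)\, a^{m s} b^{n s}$ is an algebraic number lying in $K$ with controlled degree and height. By Siegel's lemma applied to a suitable linear system over $\mathbb{Z}$, for appropriately related parameters $L$ and $S$, one can find coefficients $p(m,n)$, not all zero, with $\max |p(m,n)|$ explicitly bounded, such that $F(s) = 0$ for all $s = 1, \ldots, S$.

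Next I would extend the zero set of $F$ by induction. Since $c$ is irrational, the $L^{2}$ exponents $(m+cn)\lambda$ are pairwise distinct, so the exponentials $e^{(m+cn)\lambda z}$ are linearly independent over $\mathbb{C}$ and hence $F$ is not identically zero. Each inductive step balances an upper bound on $|F(z)|$ over a disk $|z| \leq R$---obtained by the maximum modulus principle applied to $F$ divided by a polynomial vanishing at the currently known integer zeros---against a Liouville-type lower bound on $|F(s')|$ at a candidate new integer $s'$, using that $F(s')$ is either zero or a non-zero algebraic number in $K$ whose modulus is bounded below in terms of its degree and height. If the upper bound beats the lower bound, then $F(s') = 0$. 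Iterating this scheme forces $F$ to vanish at arbitrarily many integers, contradicting the earlier non-vanishing.

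The main obstacle is the quantitative calibration: the parameters $L$, $S$, $R$ and the height bound on the $p(m,n)$ must be chosen so that the upper and lower bounds align at every stage of the induction, while the heights of the $F(s')$ are tracked carefully through the algebraic operations involved. This delicate balance is the technical core of Gelfond--Schneider, and in the context of the present paper the theorem is simply invoked from a standard reference such as Baker's monograph rather than reproved from scratch.
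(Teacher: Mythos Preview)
The paper does not prove Theorem~\ref{thm:GS} at all: it is stated as a classical result and simply used as input to Corollary~\ref{corl:gelfond}.  Your outline is a correct high-level sketch of the standard Gelfond--Schneider auxiliary-function argument, and you rightly observe at the end that in this paper the theorem is merely invoked from the literature rather than reproved.
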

In fact we will make use of the following corollary, which is obtained by applying Theorem~\ref{thm:GS} to the algebraic numbers
  $a=e^{i(\alpha_2-\alpha_1)}$ and $b=e^{i(\beta_2-\beta_1)}$.
\begin{corollary}
  Let $\alpha_1 \neq \beta_1$, $\alpha_2 \neq \beta_2$ all lie in
  $[0,\pi]$ and suppose that
  $\cos(\alpha_1),\cos(\alpha_2),\cos(\beta_1)$ and $\cos(\beta_2)$
  are algebraic.  Then $\frac{\beta_2-\alpha_2}{\beta_1-\alpha_1}$ is
  either rational or transcendental.
\label{corl:gelfond}
\end{corollary}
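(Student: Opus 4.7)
The plan is to apply Theorem~\ref{thm:GS} to the algebraic numbers $a := e^{i(\beta_1-\alpha_1)}$ and $b := e^{i(\beta_2-\alpha_2)}$, choosing the branch of $\log$ for which $\log(a) = i(\beta_1-\alpha_1)$ and $\log(b) = i(\beta_2-\alpha_2)$. Then $\log(b)/\log(a)$ is exactly $(\beta_2-\alpha_2)/(\beta_1-\alpha_1)$, so the conclusion of the corollary follows directly from that of Gelfond--Schneider applied to this pair, provided we can check the hypotheses.

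The first step is to verify that $a$ and $b$ are algebraic. Since $\alpha_1, \beta_1 \in [0,\pi]$, the sines $\sin(\alpha_1) = \sqrt{1-\cos^2(\alpha_1)}$ and $\sin(\beta_1) = \sqrt{1-\cos^2(\beta_1)}$ are non-negative real square roots of algebraic numbers, hence are themselves algebraic. Using the angle-subtraction identities
\[
\cos(\beta_1-\alpha_1) = \cos(\beta_1)\cos(\alpha_1) + \sin(\beta_1)\sin(\alpha_1),
\]
\[
\sin(\beta_1-\alpha_1) = \sin(\beta_1)\cos(\alpha_1) - \cos(\beta_1)\sin(\alpha_1),
\]
the real and imaginary parts of $a$ are algebraic, so $a$ is algebraic; the identical argument handles $b$.

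Next I would check the non-degeneracy hypotheses of Theorem~\ref{thm:GS}. Both $a$ and $b$ lie on the unit circle, so each is non-zero. For $a=1$ one would need $\beta_1-\alpha_1 \in 2\pi\mathbb{Z}$, which is impossible because $\alpha_1 \neq \beta_1$ with both in $[0,\pi]$ forces $\beta_1-\alpha_1 \in [-\pi,\pi]\setminus\{0\}$; similarly $b \neq 1$. Gelfond--Schneider now yields that $(\beta_2-\alpha_2)/(\beta_1-\alpha_1) = \log(b)/\log(a)$ is either rational or transcendental. There is no real obstacle here; the only delicate point is the algebraicity of the sines, which depends on $[0,\pi]$ being a range on which $\sin$ is non-negative so that $\sqrt{1-\cos^2}$ unambiguously denotes an algebraic number.
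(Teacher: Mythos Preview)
Your argument is correct and follows exactly the approach indicated in the paper: apply Gelfond--Schneider to a pair of unit-modulus algebraic numbers so that the ratio of their logarithms is $(\beta_2-\alpha_2)/(\beta_1-\alpha_1)$. You have in fact supplied the details the paper omits (algebraicity of $e^{i(\beta_j-\alpha_j)}$ via the $[0,\pi]$ hypothesis, and the check that neither number equals $0$ or $1$). Your choice $a=e^{i(\beta_1-\alpha_1)}$, $b=e^{i(\beta_2-\alpha_2)}$ is the one that actually yields the desired ratio; the paper's hint $a=e^{i(\alpha_2-\alpha_1)}$, $b=e^{i(\beta_2-\beta_1)}$ appears to be a typo, as it neither produces the correct quotient nor guarantees $a,b\neq 1$ under the stated hypotheses.
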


Our results in the bounded case depend on Schanuel's conjecture, a
unifying conjecture in transcendental number theory~\cite{lang},
which, if true, greatly generalises many of the central results in the
field (including the Gelfond-Schneider Theorem, above).  Recall that a
\emph{transcendence basis} of a field extension $L/K$ is a subset
$S \subseteq L$ such that $S$ is algebraically independent over $K$
and $L$ is algebraic over $K(S)$.  All transcendence bases of $L/K$
have the same cardinality, which is called the \emph{transcendence
  degree} of the extension.

\begin{conj}[Schanuel's Conjecture~\cite{lang}] \label{schanuel} Let
  $a_1,\ldots,a_n$ be complex numbers that are linearly independent
  over $\mathbb{Q}$.  Then the field
  $\mathbb{Q}(a_1,\ldots,a_n,e^{a_1},\ldots,e^{a_n})$ has
  transcendence degree at least $n$ over $\mathbb{Q}$.
\end{conj}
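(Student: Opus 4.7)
The final statement is Schanuel's Conjecture, one of the most famous open problems in transcendental number theory. I must be up-front that no proof is known and I have none to propose; what follows is a discussion of the landscape and of where an attack would have to break new ground, rather than an actual plan.

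At a high level, a proof must produce, for any $\mathbb{Q}$-linearly independent complex numbers $a_1,\ldots,a_n$, a transcendence basis of size at least $n$ inside $\{a_1,\ldots,a_n,e^{a_1},\ldots,e^{a_n}\}$. The known special cases describe only the extreme regimes: the Lindemann--Weierstrass theorem settles the case in which $a_1,\ldots,a_n$ are algebraic, and Baker's theorem (together with Gelfond--Schneider, which appears earlier in this section) settles the case in which $e^{a_1},\ldots,e^{a_n}$ are algebraic. A proof of Schanuel in full generality must bridge these extremes; already the instance $n=2$, $a_1=1$, $a_2=i\pi$, which would give algebraic independence of $e$ and $\pi$, is wide open.

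The obvious starting point is the classical method of auxiliary functions: using Siegel's lemma, construct an entire function $F$ in $n$ variables vanishing to high order at a suitable lattice of points built from the $a_j$ and $e^{a_j}$, then play an arithmetic lower bound on the values of $F$ at further test points against an analytic Schwarz-type upper bound, and finally extract a contradiction from any purported algebraic-dependence relation of transcendence degree strictly below $n$. The principal obstacle is exactly the step where one needs quantitative zero-estimates (in the style of Philippon or Nesterenko) for exponential polynomials in several variables that are sharp enough to force the transcendence-degree lower bound without any algebraicity hypothesis on either the $a_j$ or the $e^{a_j}$; no such estimate is presently known, and this is widely regarded as the core difficulty. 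For the purposes of the paper the conjecture is used only as a hypothesis, in the Macintyre--Wilkie spirit, to obtain conditional decidability of the Bounded Continuous Skolem Problem; discharging it is beyond current techniques, and I would not attempt to do so here.
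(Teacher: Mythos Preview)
Your assessment is correct and matches the paper's own treatment: Schanuel's Conjecture is stated as a conjecture, not proved, and is used throughout only as a hypothesis (notably in Proposition~\ref{prop:two} and Theorem~\ref{thm:main}) to obtain conditional decidability of the Bounded Continuous Skolem Problem. There is no proof in the paper to compare against, and your summary of the known partial results and of where the difficulty lies is accurate and appropriate.
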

A special case of Schanuel's conjecture, that is known to hold
unconditionally, is the Lindemann-Weierstrass Theorem~\cite{lang}: if
$a_1,\ldots,a_n$ are algebraic numbers that are linearly independent
over $\mathbb{Q}$, then $e^{a_1},\ldots,e^{a_n}$ are algebraically
independent.

We apply Schanuel's conjecture via the following proposition.
\begin{proposition}
  Let $\{a_1,\ldots,a_r\}$ and $\{b_1,\ldots,b_s\}$ be
  $\mathbb{Q}$-linearly independent sets of real algebraic numbers.
  Furthermore, let $P,Q\in\mathcal{R}$ be two polynomials that have
  algebraic coefficients and are coprime in $\mathcal{R}$.  Then the
  equations
\begin{eqnarray}
P(t,e^{a_1 t},\ldots,e^{a_r t},e^{ib_1 t},\ldots,e^{i b_s t})&=& 0
\label{eq:poly1}
\\
Q(t,e^{a_1 t},\ldots,e^{a_r t},e^{ib_1 t},\ldots,e^{i b_s t})&=& 0
\label{eq:poly2}
\end{eqnarray}
have no non-zero common solution $t\in\mathbb{R}$.
\label{prop:two}
\end{proposition}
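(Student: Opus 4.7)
The plan is to argue by contradiction: assume some $t\in\mathbb{R}\setminus\{0\}$ is a common solution of equations~(\ref{eq:poly1}) and~(\ref{eq:poly2}), then combine Schanuel's Conjecture with the UFD structure of $\mathcal{R}$ developed in Section~\ref{sec:laurent} to derive a contradiction. The degenerate case $r=s=0$ is immediate, since $P$ and $Q$ then reduce to coprime univariate polynomials with algebraic coefficients and so share no complex root; hence I assume $r+s\geq 1$.

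The first step is to verify that the $r+s$ complex numbers $a_1t,\ldots,a_rt,ib_1t,\ldots,ib_st$ are linearly independent over $\mathbb{Q}$. Splitting any hypothetical $\mathbb{Q}$-linear relation into its real and imaginary parts, and using $t\in\mathbb{R}\setminus\{0\}$, yields separate $\mathbb{Q}$-relations on the $a_j$ and on the $b_k$, both trivial by assumption. Schanuel's Conjecture applied to this tuple then gives
\[ \mathrm{tr.deg}_{\mathbb{Q}}\,\mathbb{Q}\bigl(a_1t,\ldots,ib_st,\,e^{a_1t},\ldots,e^{ib_st}\bigr) \geq r+s. \]
Since the $a_j$ and $b_k$ are algebraic and $t\neq 0$, this field is $L:=\mathbb{Q}(t,e^{a_1t},\ldots,e^{a_rt},e^{ib_1t},\ldots,e^{ib_st})$.

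Next I would translate this into a codimension statement. Evaluation at the tuple $\alpha:=(t,e^{a_1t},\ldots,e^{a_rt},e^{ib_1t},\ldots,e^{ib_st})$ is a well-defined ring homomorphism $\phi:\mathcal{R}\to\mathbb{C}$ because the Laurent variables are sent to non-zero complex numbers. Its kernel $\mathfrak{p}$ is a prime ideal, and $\mathcal{R}/\mathfrak{p}$ is a subring of the compositum $K\cdot L=K(\alpha)$, whose transcendence degree over $K$ equals that of $L$ over $\mathbb{Q}$ (as $K/\mathbb{Q}$ is algebraic) and so is at least $r+s$. As $\mathcal{R}$ is an affine domain over $K$ of Krull dimension $1+r+s$, catenarity yields
\[ \mathrm{height}(\mathfrak{p}) = \dim\mathcal{R}-\dim(\mathcal{R}/\mathfrak{p}) \leq (1+r+s)-(r+s)=1. \]

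Finally I would exploit that $\mathcal{R}$ is a UFD. Both $P$ and $Q$ are non-zero elements of $\mathfrak{p}$, so $\mathfrak{p}\neq(0)$, and it must therefore have height exactly $1$ and hence be principal, say $\mathfrak{p}=(f)$ with $f$ irreducible. But then $f\mid P$ and $f\mid Q$, contradicting coprimality of $P$ and $Q$ in $\mathcal{R}$. I expect the main obstacle not to be the transcendence input, which is a single clean application of Schanuel's Conjecture, but rather the dimension-theoretic bookkeeping on the Laurent ring $\mathcal{R}$: justifying the equality $\mathrm{height}(\mathfrak{p})+\dim(\mathcal{R}/\mathfrak{p})=\dim\mathcal{R}$ via catenarity of affine domains, and identifying $\dim(\mathcal{R}/\mathfrak{p})$ with the transcendence degree of its fraction field over $K$.
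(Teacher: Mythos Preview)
Your argument is correct and reaches the same conclusion as the paper, but by a genuinely different route. The paper proceeds concretely: it first clears denominators to pass to the polynomial ring $\mathcal{A}$, reduces to $P$ and $Q$ coprime in $\mathcal{A}$, then picks a variable $\sigma$ of positive degree in $P$ and eliminates it via the resultant $\mathrm{Res}_\sigma(P,Q)$. This produces a non-trivial algebraic relation among the remaining $r+s$ components of the tuple $S=(t,e^{a_1t},\ldots,e^{ib_st})$, forcing $\mathrm{tr.deg}_{\mathbb{Q}}\,\mathbb{Q}(S)\leq r+s-1$ and contradicting Schanuel directly.

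You instead package the contradiction structurally: the evaluation kernel $\mathfrak{p}\subset\mathcal{R}$ has height at most $1$ by the dimension formula for affine domains over a field, and height-$1$ primes in a Noetherian UFD are principal, so $P$ and $Q$ would acquire a common irreducible factor. This is cleaner in that it stays entirely in $\mathcal{R}$ and avoids both the passage to $\mathcal{A}$ and the explicit resultant computation; the price is that you invoke standard but less elementary commutative algebra (the dimension formula via Noether normalisation, and the principal-prime characterisation of UFDs) in place of hands-on elimination. The two arguments are essentially dual---the paper exhibits the implied algebraic dependence explicitly through a vanishing resultant, while you infer the existence of a common divisor from the height bound---and both rest on the same single application of Schanuel's Conjecture, which you set up identically.
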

\begin{proof}
  Consider a solution $t\neq 0$ of Equations~(\ref{eq:poly1})
  and~(\ref{eq:poly2}).  By passing to suitable associates, we may
  assume without loss of generality that $P$ and $Q$ lie in
  $\mathcal{A}$, i.e., that all variables in $P$ and $Q$ appear with
  non-negative exponent.  Moreover, since $P$ and $Q$ are coprime in
  $\mathcal{R}$, their greatest common divisor $R$ in $\mathcal{A}$ is
  a monomial. In particular,
  \[R(t,e^{a_1 t},\ldots,e^{a_r t},e^{ib_1 t},\ldots,e^{i b_s t})\neq
  0 \, .\]
  Thus, dividing $P$ and $Q$ by $R$, we may assume that $P$ and $Q$
  are coprime in $\mathcal{A}$ and that Equations~(\ref{eq:poly1})
  and~(\ref{eq:poly2}) still hold.

  By Schanuel's conjecture, the extension
\[\mathbb{Q}(a_1t,\ldots,a_rt,ib_1t,\ldots,ib_st,e^{a_1t},\ldots,e^{a_rt},
e^{ib_1t},\ldots,e^{ib_st})/\mathbb{Q} \]
has transcendence degree at least $r+s$.  Since $a_1,\ldots,a_r$
and $b_1,\ldots,b_s$ are algebraic over $\mathbb{Q}$, writing 
\[ S:= \langle t,e^{a_1t},\ldots,e^{a_rt},e^{ib_1t},\ldots,e^{ib_st} \rangle \, , \]
it follows that the extension $\mathbb{Q}(S)/\mathbb{Q}$ also has 
transcendence degree at least $r+s$.

From Equations~(\ref{eq:poly1}) and~(\ref{eq:poly2}) we can regard $S$
as specifying a common root of $P$ and $Q$.  Pick some variable
$\sigma\in\{x,y_j,z_j : 1\leq i \leq r, 1 \leq j\leq s\}$ that has
positive degree in $P$.  Then the component of $S$ corresponding to
$\sigma$ is algebraic over the remaining components of $S$.  We claim
that the remaining components of $S$ are algebraically dependent and
thus $S$ comprises at most $r+s-1$ algebraically independent elements,
contradicting Schanuel's conjecture.  The claim clearly holds if
$\sigma$ does not appear in $Q$.  On the other hand, if $\sigma$ has
positive degree in $Q$ then, since $P$ and $Q$ are coprime
in $\mathcal{A}$, the multivariate resultant $\mathrm{Res}_\sigma(P,Q)$ is
a non-zero polynomial in the set of variables
$\{x,y_j,z_j : 1\leq i \leq r, 1 \leq j\leq s\} \setminus \{ \sigma
\}$
which has a root at $S$ (see, e.g.,~\cite[Page 163]{CoxLS07}).  Thus
the claim also holds in this case.  In either case we obtain a
contradiction to Schanuel's conjecture and we conclude that
Equations~(\ref{eq:poly1}) and~(\ref{eq:poly2}) have no non-zero
solution $t\in\mathbb{R}$.
\end{proof}

\section{Decidability of the Bounded Continuous Skolem Problem}
\label{sec:bounded}
Suppose that $\{a_1,\ldots,a_r\}$ and $\{ib_1,\ldots,ib_s\}$ are
$\mathbb{Q}$-linearly independent sets of real and imaginary numbers
respectively.  Let the ring of Laurent polynomials $\mathcal{R}$ be as
in Section~\ref{sec:laurent} and consider the exponential polynomial
\begin{gather}
f(t)=P(t,e^{a_1t},\ldots,e^{a_r t},e^{i b_1 t},\ldots,
e^{i b_s t}) \, , 
\label{eq:type1}
\end{gather}
where $P\in \mathcal{R}$ is irreducible.  We say that $f$ is a
\emph{Type-1} exponential polynomial if $P$ and $\overline{P}$ are not
associates in $\mathcal{R}$, we say that $f$ is \emph{Type-2} if
$P=\alpha\overline{P}$ for some $\alpha\in\mathbb{C}$, and we say that
$f$ is \emph{Type-3} if $P=U\overline{P}$ for some non-constant unit
$U\in\mathcal{R}$.  

\begin{example}
  The simplest example of a Type-3 exponential polynomial is
  $g(t)=1+e^{it}$.  Here $g(t)=P(e^{it})$, where $P(z)=1+z$ is an
  irreducible polynomial that is associated with its conjugate
  $\overline{P}(z)=1+{z^{-1}}$. Note that the exponential polynomial
  $f(t)=2+2\cos(t)$, which has infinitely many tangential zeros, factors as the product of two
  type-3 exponential polynomials $f(t)=g(t)\overline{g(t)}$.
\end{example}

In the case of a Type-2 exponential polynomial $P=\alpha \overline{P}$
it is clear that we must have $|\alpha|=1$.  Moreover, by replacing
$P$ by $\beta P$, where $\beta^2=\overline{\alpha}$, we may assume
without loss of generality that $P=\overline{P}$.  Similarly, in the
case of a Type-3 exponential polynomial, we can assume without loss of
generality that $P=\boldsymbol{z}^{\boldsymbol{u}}\overline{P}$ for
some non-zero vector $\boldsymbol{u}\in\mathbb{Z}^s$.

Now consider an arbitrary exponential polynomial
$f(t):=\sum_{j=1}^m P_j(t)e^{\lambda_j t}$.  Assume that the
coefficient field $K$ of $\mathcal{R}$ contains
the coefficients of $P_1,\ldots,P_m$.
Let $\{a_1,\ldots,a_r\}$ be a basis of the $\mathbb{Q}$-vector space
spanned by $\{ \Re(\lambda_j) : 1 \leq j \leq m\}$ and let
$\{b_1,\ldots,b_s\}$ be a basis of the the $\mathbb{Q}$-vector space
spanned by $\{ \Im(\lambda_j) : 1 \leq j \leq m\}$.  Without loss of
generality we may assume that each characteristic root $\lambda$ is an
integer linear combination of $a_1,\ldots,a_r$ and $ib_1,\ldots,ib_s$.
Then $e^{\lambda t}$ is a product of positive and negative powers of
$e^{a_1t},\ldots,e^{a_rt}$ and $e^{ib_1t},\ldots,e^{ib_st}$, and hence
there is a Laurent polynomial $P\in \mathcal{R}$ such that
\begin{gather}
f(t) = P(t,e^{a_1t},\ldots,e^{a_r t},e^{i b_1 t},\ldots,
e^{i b_s t}) \,  . 
\label{eq:eff}
\end{gather}

Since $P$ can be written as a product of irreducible factors, it
follows that $f$ can be written as product of Type-1, Type-2, and
Type-3 exponential polynomials, and moreover this factorisation can be
computed from $f$.  Thus it suffices to show how to decide the
existence of zeros of these three special forms of exponential
polynomial.  We will handle all three cases using Schanuel's
conjecture.

Writing the exponential polynomial $f(t)$ in (\ref{eq:eff}) in the
form $f(t)=\sum_{j=1}^m Q_j(t)e^{\lambda_j t}$, it follows from the
irreducibility of $P$ that the polynomials $Q_1,\ldots,Q_m$ have no
common root.  But then by the Lindemann-Weierstrass Theorem any zero
of $f$ must be transcendental (see~\cite[Theorem 8]{pisot}).

\begin{theorem}
  The Bounded Continuous Skolem Problem is decidable subject to
  Schanuel's conjecture.
\label{thm:main}
\end{theorem}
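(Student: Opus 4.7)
The plan is to factor the defining Laurent polynomial $P \in \mathcal{R}$ into irreducibles (which is effective by the discussion in Section~\ref{sec:laurent}); since a product vanishes iff one of its factors does, it suffices to decide the existence of zeros for each irreducible factor in turn. Each such factor is of Type 1, 2, or 3, and I will handle each case using Schanuel's Conjecture via Proposition~\ref{prop:two}.

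For a Type-1 factor, $P$ and $\overline{P}$ are distinct irreducibles, hence coprime in $\mathcal{R}$. Unfolding the definition of conjugation in $\mathcal{R}$ shows that $\overline{P}$ evaluated on the tuple $(t, e^{a_i t}, e^{i b_j t})$ equals $\overline{f(t)}$, so a non-zero real zero of $f$ would be a common non-zero real solution of $P = 0$ and $\overline{P} = 0$, contradicting Proposition~\ref{prop:two}; the only remaining candidate $t = 0$ is tested directly by evaluating $f(0)$, which is algebraic. For a Type-2 factor we may assume $P = \overline{P}$, whence $f$ is real-valued, and the goal is to invoke the zero-finding algorithm of Section~\ref{sec:zero-finding}. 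Lipschitz-ness and numerical approximability of $f$ are routine; moreover any zero of $f$ is transcendental by Lindemann--Weierstrass (as remarked just before the theorem), so $f$ does not vanish at any rational endpoint $\neq 0$. The crucial condition is the absence of tangential zeros. Let $L := \partial_x + \sum_j a_j y_j \partial_{y_j} + \sum_j i b_j z_j \partial_{z_j}$ be the natural derivation on $\mathcal{R}$, so that $f'(t)$ equals $L(P)$ evaluated on the tuple. If $P \nmid L(P)$, then $P$ and $L(P)$ are coprime (as $P$ is irreducible), and Proposition~\ref{prop:two} precludes a common non-zero real solution. If instead $P \mid L(P)$, write $L(P) = PV$; then $f$ solves the linear ODE $f' = V(\mathrm{tuple})\,f$, whose non-zero solutions never vanish, so zero existence reduces to testing whether $f(0) = 0$.

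Finally, for a Type-3 factor $P = \boldsymbol{z}^{\boldsymbol{u}}\overline{P}$ a short computation from the definition of conjugation gives $\overline{f(t)} = e^{-i\langle \boldsymbol{u},\boldsymbol{b}\rangle t} f(t)$, so $g(t) := e^{-i\langle \boldsymbol{u},\boldsymbol{b}\rangle t/2} f(t)$ is real-valued with the same zeros as $f$. To place $g$ in our Laurent framework I refine the basis of frequencies by replacing each $b_j$ with $b_j/2$ whenever $u_j$ is odd and adjoining a fresh variable for the halved frequency; the refined basis remains $\mathbb{Q}$-linearly independent, and $g$ arises as evaluation of a self-conjugate Laurent polynomial in the enlarged ring, which I re-factor and feed back into the Type-2 procedure. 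The main obstacle lies in the Type-2 analysis, specifically in ensuring that Proposition~\ref{prop:two} is applicable; rather than arguing coprimality of $P$ and $L(P)$ directly, the plan above dispatches the divisible case $P \mid L(P)$ through the linear ODE. A secondary subtlety is that the basis refinement used in Type 3 may render the new Laurent polynomial reducible even when $P$ was irreducible, so one must factor again in the enlarged ring and verify that each new irreducible is still of one of the three types, which ensures the recursion terminates.
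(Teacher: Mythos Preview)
Your treatment of Types~1 and~2 is essentially the paper's (the side case $P\mid L(P)$ never actually occurs, as the paper shows by a direct degree argument, but your ODE handling of it is harmless). The genuine gap is in Type~3: the halve-the-frequency recursion does not terminate. Take the simplest example $f(t)=1+e^{it}$, so $P=1+z$ with $P=z\,\overline{P}$. Your $g(t)=e^{-it/2}f(t)=2\cos(t/2)$ corresponds to $R=z'^{-1}+z'$ in the refined ring; over any $K\ni i$ this factors as a unit times $(z'+i)(z'-i)$, and each linear factor again satisfies $(z'\pm i)=\pm i\,z'\,\overline{(z'\pm i)}$, i.e.\ is Type~3 with $\boldsymbol{u}=1$. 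Refining once more replaces $z'$ by $z''^2$ and you are back to a product of degree-one Type-3 factors, ad infinitum. So the clause ``which ensures the recursion terminates'' is unjustified and in fact false.

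The paper takes a different route for Type~3: using Proposition~\ref{prop:three} it writes $P=Q+\boldsymbol{z}^{\boldsymbol{u}}\overline{Q}$ with $P\nmid Q$, splits $f=g_1+g_2$ accordingly (so $|g_1|=|g_2|$ and, by Proposition~\ref{prop:two}, $g_2$ never vanishes), and then works with the real-valued $h(t)=\pi+i\log(g_1(t)/g_2(t))$, whose tangential zeros are ruled out by the same $P$-vs-$L(P)$ coprimality argument you used in Type~2 (applied to the \emph{original} irreducible $P$, not to any refinement). Your idea can in fact be salvaged along similar lines: since zeros of $g$ coincide with zeros of $f$, and $g'(t)=e^{-i\langle\boldsymbol{u},\boldsymbol{b}\rangle t/2}\bigl(f'(t)-\tfrac{i}{2}\langle\boldsymbol{u},\boldsymbol{b}\rangle f(t)\bigr)$, a tangential zero of $g$ would force $f(t)=f'(t)=0$, which Proposition~\ref{prop:two} (with $P$ and $L(P)$ coprime in the \emph{original} ring) forbids. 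Hence you can run the Section~\ref{sec:zero-finding} procedure directly on $g$ without re-factoring in the enlarged ring at all; what you must drop is the recursion.
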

\begin{proof}
Consider an exponential polynomial
\begin{gather} 
f(t)=P(t,e^{a_1t},\ldots,e^{a_r t},e^{i b_1 t},\ldots,
e^{i b_s t}) \, , 
\label{eq:exp-poly}
\end{gather}
where $P\in\mathcal{R}$ is irreducible.  Suppose that
$\{a_1,\ldots,a_r\}$ and $\{ib_1,\ldots,ib_s\}$ are
$\mathbb{Q}$-linearly independent sets of, respectively, real and
imaginary numbers lying in the coefficient field $K$ of $\mathcal{R}$.
We show how to decide whether $f$ has a zero in a bounded interval
$I\subseteq \mathbb{R}_{\geq 0}$, considering separately the case of
Type-1, Type-2, and Type-3 exponential polynomials.

\subsection*{Case (i): $f$ is a type-1 exponential polynomial}
Note that $P$ and $\overline{P}$ are coprime in $\mathcal{R}$ since,
by assumption, they are both irreducible and are not associates.  We
claim that in this case the equation $f(t)=0$ has no solution
$t\in\mathbb{R}$.  Indeed $f(t)=0$ implies
\begin{eqnarray*}
P(t,e^{a_1t},\ldots,e^{a_r t},e^{i b_1 t},\ldots,e^{i b_s t})&=&0\\
\overline{P} (t,e^{a_1t},\ldots,e^{a_r t},e^{i b_1 t},\ldots,e^{i b_s t})&=&0 \, ,
\end{eqnarray*}
and the non-existence of a zero of $f$ follows
immediately from Proposition~\ref{prop:two}.

\subsection*{Case (ii): $f$ is a type-2 exponential polynomial}

In this case we have $P=\overline{P}$ and so $f$ is real-valued.  Our
aim is to use the procedure of Section~\ref{sec:zero-finding} to
determine whether or not $f$ has a zero in $[c,d]$, where $c,d\in\mathbb{Q}$. To this end,
notice first that $f(c),f(d)\neq 0$ since any root of $f$ must be
transcendental.  Moreover, since $f'$ is bounded on $[c,d]$, $f$ is
Lipschitz on $[c,d]$.  It remains to verify that the equations
$f(t)=0,f'(t)=0$ have no common solution $t\in [c,d]$.

We can write $f'(t)$ in the form
\begin{gather*}
f'(t) = Q(t,e^{a_1 t},\ldots,e^{a_r t}, e^{i b_1 t},\ldots,e^{i b_s t}) \, ,
\end{gather*}
where $Q$ is the polynomial 
\[ Q = \frac{\partial P}{\partial x} + \sum_{j=1}^r a_j y_j 
\frac{\partial P}{\partial y_j} + \sum_{j=1}^s i b_j z_j 
\frac{\partial P}{\partial z_j}
\, . \]
We claim that $P$ and $Q$ are coprime in $\mathcal{R}$.  Indeed, since
$P$ is irreducible, $P$ and $Q$ can only fail to be coprime if $P$
divides $Q$.

If $P$ has strictly positive degree $k$ in $x$ then $Q$ has degree
$k-1$ in $x$ and thus $P$ cannot divide $Q$.  (Recall that all
polynomials in $\mathcal{R}$ have non-negative degree in the variable
$x$.)  On the other hand, if $P$ has degree $0$ in $x$ then $Q$ is
obtained from $P$ by multiplying each monomial
$\boldsymbol{y}^{\boldsymbol{u}}\boldsymbol{z}^{\boldsymbol{v}}$
appearing in $P$ by the complex-number constant
$\sum_{j=1}^r a_j u_j + i \sum_{j=1}^s b_j v_j$.  Moreover, by the
assumption of linear independence of $\{a_1,\ldots,a_r\}$ and
$\{b_1,\ldots,b_s\}$, each monomial in $P$ is multiplied by a
different constant.  Since $P$ is not a unit, it has at least two
different monomials and so $P$ is not a constant multiple of $Q$.
Furthermore, for each variable
$\sigma\in \{ y_j,y_j^{-1} :1\leq j \leq r\} \cup \{z_j,z_j^{-1} :
1\leq j\leq s\}$, its degree in $P$ is equal to its degree in $Q$.
Thus $P$ cannot be a multiple of $Q$ by a non-constant polynomial
either.

We conclude that $P$ does not divide $Q$ and hence $P$ and $Q$ are
coprime.  It now follows from Proposition~\ref{prop:two} that the
equations $f(t)=f'(t)=0$ have no solution $t\in\mathbb{R}$.

\subsection*{Case (iii): $f$ is a type-3 exponential polynomial}

Suppose that $f$ is a Type-3 exponential polynomial.  Then in
(\ref{eq:exp-poly}) we have that
$P=\boldsymbol{z}^{\boldsymbol{u}}\overline{P}$ for some non-zero
vector $\boldsymbol{u}\in\mathbb{Z}^s$.  By
Proposition~\ref{prop:three} we can write $P = Q +
\boldsymbol{z}^{\boldsymbol{u}}\overline{Q}$ for some polynomial
$Q\in\mathcal{R}$ that is coprime with $P$.

Now define
\[
 g_1(t):= Q(t,e^{a_1t},\ldots,e^{a_r t},e^{i b_1 t},\ldots, e^{i b_s  t}) \]
and
$g_2(t):=e^{ib_1u_1}\cdots e^{ib_su_s}\overline{g_1(t)}$,
so that $f(t)=g_1(t)+g_2(t)$ for all $t$.

We show that $g_2(t)\neq 0$ for all $t\in\mathbb{R}$.  Indeed if $g_2(t)=0$ 
for some $t$ then we also have
$g_1(t)=0$ and hence $f(t)=0$.  For such a $t$ it follows that
\begin{eqnarray*}
P(t,e^{a_1t},\ldots,e^{a_r t},e^{i b_1 t},\ldots, e^{i b_s
    t}) &=& 0\\
Q(t,e^{a_1t},\ldots,e^{a_r t},e^{i b_1 t},\ldots, e^{i b_s
    t}) &=& 0 \, .
\end{eqnarray*}
But $P$ and $Q$ are coprime and so these two equations cannot both
hold by Proposition~\ref{prop:two}.  Not only do we have
$g_2(t)\neq 0$ for all $t\in\mathbb{R}$, but, applying the sampling
procedure in Section~\ref{sec:zero-finding} to $|g_2(t)|^2$ (which is
a differentiable function) we can compute a strictly positive lower
bound on $|g_2(t)|$ over the interval $[c,d]$.

Since $g_2(t)\neq 0$ for all $t\in\mathbb{R}$ we may define the function
$h:[c,d]\rightarrow \mathbb{R}$ by
\[ h(t):= \pi + i\log\left(\frac{g_1(t)}{g_2(t)}\right) \, .\]
Notice that $h(t)=0$ if and only if $f(t)=0$.  Our aim is to use the
procedure of Section~\ref{sec:zero-finding} to decide the existence of
a zero of $h$ in the interval $[c,d]$, and thus decide whether $f$ has
a zero in $[c,d]$.

Let $t \in (c,d)$ be such that $h(t)=0$.  Then $g_1(t)=-g_2(t)$ and so
$\frac{g_1(t)}{g_2(t)}=-1$ does not lie on the branch cut of the
logarithm function. It follows that $h$ is differentiable at $t$ and
\begin{eqnarray*}
 h'(t)=0
& \;\mbox{ iff }\; &
\frac{g_2(t)}{g_1(t)} \, \frac{g_1'(t)g_2(t)-g_2'(t)g_1(t)}{g_2(t)^2}  = 0\\[2pt]
& \;\mbox{ iff }\;& g_1'(t)g_2(t)-g_2'(t)g_1(t) = 0 \;\;\mbox{ (since
$|g_1(t)|=|g_2(t)|\neq 0$)}\\[2pt]
& \;\mbox{ iff }\; & g_1'(t)g_2(t)+g_2'(t)g_2(t)= 0
\;\;\mbox{ (since $g_1(t)=-g_2(t)$)}\\[2pt]
&  \;\mbox{ iff }\; & g_1'(t)+g_2'(t)=0 \\[2pt]
&  \;\mbox{ iff }\; & f'(t)=0 \, .
\end{eqnarray*}

Thus $h(t)=h'(t)=0$ implies $f(t)=f'(t)=0$.  But the proof in Case
(ii) shows that $f(t)=f'(t)=0$ is impossible.  (Nothing in that
argument hinges on $f$ being real-valued.)  Thus $h$ has no tangential
zeros in $(c,d)$.

We cannot directly use the procedure in Section~\ref{sec:zero-finding}
to decide whether $h$ has a zero in $[c,d]$ since $h$ is not
necessarily continuous: its value can jump from $-\pi$ to $\pi$ (or
\emph{vice versa}) due to the branch cut of the logarithm along the
positive real axis.  However, due to the strictly positive lower bound
on $|g_2(t)|$, the function $|h|$ \emph{is} Lipschitz on $[c,d]$.
Thus, applying the sampling procedure in
Section~\ref{sec:zero-finding} for computing lower and upper bounds of
Lipschitz functions we can compute a set $E\subseteq [c,d]$ such that
$E$ is a finite union of intervals with rational endpoints,
$|f(t)|\leq \frac{2\pi}{3}$ for $t\in E$, and
$|f(t)|\geq \frac{\pi}{3}$ for $t\not\in E$.  In particular, $E$
contains all zeros of $f$ in $[c,d]$ and $f$ is Lipschitz on $E$.
Thus we can apply the zero-finding procedure from
Section~\ref{sec:zero-finding} to the restriction $h\!\restriction E$ and
thereby decide whether or not $h$ has a zero on $[c,d]$.

\end{proof}

\section{The Unbounded Case}
In this section we consider the unbounded case of the Continuous
Skolem Problem.  For our analysis it is convenient to present
exponential polynomials in the form
\begin{gather} 
f(t) = \sum_{j=1}^n
e^{r_jt} \left(P_{1,j}(t)\cos(\omega_j t) +
  P_{2,j}(t) \sin(\omega_j t)\right) \, ,
\label{eq:freq-form}
\end{gather}
where $r_j,\omega_j$ are real algebraic numbers and $P_{1,j}, P_{2,j}$
are polynomials with real algebraic coefficients for $j=1,\ldots,n$.
Our aim is to classify the difficulty of the problem
in terms of the number of rationally linear independent frequencies
$\omega_1,\ldots,\omega_n$.

Recall that in Section~\ref{sec:bounded} we have shown the bounded
problem to be decidable subject to Schanuel's Conjecture.  In 
Appendix~\ref{sec:one-osc} we give a reduction of the
unbounded problem to the bounded problem in case the set of
frequencies spans a one-dimensional vector space over $\mathbb{Q}$.
In the present section we give a reduction of the unbounded problem to
the bounded problem in case the set of frequencies spans a
two-dimensional vector space over $\mathbb{Q}$ and the polynomials
$P_{1,j}$ and $P_{2,j}$ are all constant.  (This last condition is
equivalent to the assumption that $f(t)$ is simple.)  The argument in
the two-dimensional case is a more sophisticated version of that in
the one-dimensional case, although the result is not more general due
the assumption of simplicity.

In Appendix~\ref{sec:hardness} we present a family of
instances showing that obtaining decidability of the unbounded problem
in the two-dimensional case without the assumption of simplicity would
require much finer Diophantine-approximation bounds than are currently
known.

\subsection{Background on Semi-Algebraic Sets}
A subset of $\mathbb{R}^n$ is \emph{semi-algebraic} if it is defined
by a Boolean combination of constraints of the form
$P(x_1,\ldots,x_n) > 0$, where $P$ is a polynomial with real algebraic
coefficients.  A partial function
$f:\mathbb{R}^n\rightarrow\mathbb{R}$ is semi-algebraic if its graph
is a semi-algebraic subset of $\mathbb{R}^{n+1}$.  The
Tarski-Seidenberg Theorem~\cite[Section 1]{Bierstone88} states that the semi-algebraic sets are
closed under projection and are therefore precisely the first-order
definable sets over the structure $(\mathbb{R},<,+,\cdot,0,1)$.

Let $(i_1,\ldots,i_n)$ be a sequence of zeros and ones of length $n\geq 1$.
An \emph{$(i_1,\ldots,i_n)$-cell} is a subset of $\mathbb{R}^n$,
defined by induction on $n$ as follows:
\begin{enumerate}
\item[(i)] A $(0)$-cell is a singleton subset of $\mathbb{R}$ and 
a $(1)$-cell is an open interval $(a,b)\subseteq\mathbb{R}$.
\item[(ii)] Let $X\subseteq\mathbb{R}^n$ be a $(i_1,\ldots,i_n)$-cell
  and $f:X\rightarrow\mathbb{R}$ a continuous semi-algebraic function.
  Then
  $\{(\boldsymbol{x},f(\boldsymbol{x}))\in\mathbb{R}^{n+1}
  :\boldsymbol{x}\in X\}$
  is a $(i_1,\ldots,i_n,0)$-cell, while
  $\{(\boldsymbol{x},y)\in\mathbb{R}^{n+1} : \boldsymbol{x}\in X
  \wedge y<f(\boldsymbol{x})\}$
  and
  $\{(\boldsymbol{x},y)\in\mathbb{R}^{n+1} : \boldsymbol{x}\in X
  \wedge y>f(\boldsymbol{x})\}$ are both $(i_1,\ldots,i_n,1)$-cells.
\item[(iii)] Let $X\subseteq\mathbb{R}^n$ be a $(i_1,\ldots,i_n)$-cell
  and $f,g:X\rightarrow\mathbb{R}$ continuous semi-algebraic
  functions such that $f(\boldsymbol{x})<g(\boldsymbol{x})$ for all
  $\boldsymbol{x}\in X$.  Then
  $\{ (\boldsymbol{x},y) \in \mathbb{R}^{n+1} : f(\boldsymbol{x})<y<
  g(\boldsymbol{x})\}$ is a $(i_1,\ldots,i_n,1)$-cell.
\end{enumerate}
A cell in $\mathbb{R}^n$ is a $(i_1,\ldots,i_n)$-cell for some
(necessarily unique) sequence $(i_1,\ldots,i_n)$.

A fundamental result about semi-algebraic sets, that we will use below,
is the Cell-Decomposition Theorem~\cite{BasuPR06}: given a
semi-algebraic set $E\subseteq \mathbb{R}^n$ one can compute a
partition of $E$ as a disjoint union of cells
$E=C_1\cup \ldots \cup C_m$.

We will also need the following result, proved in Appendix~\ref{sec:missing}.
\begin{lemma}
  Let $D\subseteq \mathbb{R}^{n}$ be a semi-algebraic set,
  $g : D \rightarrow \mathbb{R}$ a bounded semi-algebraic function, and
  $r_1,\ldots,r_n$ real algebraic numbers.  Define
  $S=\{t\in\mathbb{R}_{\geq 0} : (e^{r_1t},\ldots,e^{r_nt}) \in D \}$.
  Then
\begin{enumerate}
\item[(i)] It is decidable whether or not $S$ is bounded.  If
  $S$ is bounded then we can compute $T_0\in\mathbb{N}$ such that
  $S\subseteq [0,T_0]$ and if $S$ is unbounded then we can compute
  $T_0\in\mathbb{N}$ such that $(T_0,\infty) \subseteq S$.
\item[(ii)] If $S$ is unbounded then 
the limit $g^* = \lim_{t\rightarrow \infty} g(e^{r_1t},\ldots,e^{r_nt})$
  exists, is an algebraic number, and there are effective constants
  $T_1,\varepsilon>0$ such that
  $|g(e^{r_1t},\ldots,e^{r_nt})- g^*| < e^{-\varepsilon t}$ for all $t>T_1$.
\end{enumerate}
\label{lem:combine}
\end{lemma}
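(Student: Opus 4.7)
For part~(i), the plan is to apply the Cell-Decomposition Theorem to $D$, yielding a finite collection $\mathcal{P}$ of polynomials with real algebraic coefficients in $x_1,\ldots,x_n$ whose sign patterns determine each cell of a decomposition compatible with $D$. For each $P\in\mathcal{P}$, evaluating at $\phi(t):=(e^{r_1 t},\ldots,e^{r_n t})$ and grouping monomials of $P$ with equal exponent $\sum_i \alpha_i r_i$ yields a real-exponent exponential sum $P(\phi(t))=\sum_k c_k e^{\mu_k t}$ with distinct $\mu_k$. Such a sum is dominated for large $t$ by its largest-$\mu_k$ nonzero term, so it has an eventually constant sign that is effectively computable, together with a stabilization threshold $T_P$ obtained from the exponent gap and the coefficient magnitudes. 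Setting $T^*:=\max_P T_P$, the point $\phi(t)$ lies in a single cell for all $t>T^*$; $S$ is unbounded iff this cell is contained in $D$, and in that case $T_0:=T^*$ works, while otherwise $S\subseteq[0,T_0]$.

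For part~(ii), I would refine the cell decomposition to be compatible with the graph of $g$ (partitioning $D\times\mathbb{R}$ so that the graph of $g$ is a union of cells). By~(i), $\phi(t)$ is eventually confined to a single cell $C\subseteq D$ on which $g$ coincides with a continuous semi-algebraic function $f$. Since the graph of $f$ is semi-algebraic, $f$ is a branch of an algebraic function satisfying $P(\boldsymbol{x},f(\boldsymbol{x}))=0$ for some nonzero $P\in K[\boldsymbol{x},y]$. To handle coordinates with $r_i>0$ (where $e^{r_i t}\to\infty$), substitute $X_i:=e^{-r_i t}$ for those $i$ and $X_i:=e^{r_i t}$ otherwise; then $\boldsymbol{X}(t)\to \boldsymbol{X}^*\in\{0,1\}^n$ exponentially fast as $t\to\infty$. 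Clearing denominators yields a polynomial identity $\tilde P(\boldsymbol{X}(t),y(t))=0$ with $\tilde P\in K[\boldsymbol{X},y]$ and $y(t):=g(\phi(t))$.

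By Wilkie's theorem, $\mathbb{R}_{\mathrm{exp}}$ is o-minimal, so the bounded definable function $t\mapsto y(t)$ is eventually monotone, whence $g^*:=\lim_{t\to\infty}y(t)$ exists in the range of $g$. Passing to the limit in $\tilde P(\boldsymbol{X}(t),y(t))=0$ gives $\tilde P(\boldsymbol{X}^*,g^*)=0$: if the univariate polynomial $y\mapsto\tilde P(\boldsymbol{X}^*,y)$ is nonzero then $g^*$ is algebraic, being a root of a polynomial with algebraic coefficients; otherwise one iterates by a Newton-polytope / blow-up argument on lower-order terms of $\boldsymbol{X}(t)-\boldsymbol{X}^*$ to extract a nontrivial limit equation for $g^*$. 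For the exponential rate, combine a Lojasiewicz-type inequality $|f(\boldsymbol{X})-g^*|\le C_0\|\boldsymbol{X}-\boldsymbol{X}^*\|^\gamma$ near $\boldsymbol{X}^*$ on $C$ (for some effective $\gamma>0$) with the bound $\|\boldsymbol{X}(t)-\boldsymbol{X}^*\|\le C_1 e^{-\rho t}$ where $\rho:=\min\{|r_i|:r_i\neq 0\}$, yielding $|y(t)-g^*|<e^{-\varepsilon t}$ for some computable $\varepsilon>0$ and all sufficiently large $t$.

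The main obstacle is ensuring non-triviality of the limiting polynomial $\tilde P(\boldsymbol{X}^*,y)$; in the degenerate case this forces iteration via a Newton-polytope analysis to locate the leading nontrivial balance and hence the limit $g^*$. A secondary technical point is effectivizing the Lojasiewicz exponent: this is standard for semi-algebraic functions via quantifier-free descriptions of the graph, but requires careful degree bookkeeping to produce explicit $C_0,\gamma$, and thence $\varepsilon,T_1$.
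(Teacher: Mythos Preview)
Your approach to part~(i) is essentially the paper's: determine the eventual sign of each defining polynomial $P$ along $\phi(t)=(e^{r_1t},\ldots,e^{r_nt})$ by writing $P(\phi(t))$ as an exponential sum $\sum_k c_k e^{\mu_k t}$ and reading off the dominant term.

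For part~(ii) you take a substantially heavier route and leave real gaps. You invoke o-minimality of $\mathbb{R}_{\exp}$ for existence of the limit, pass to a limit in a polynomial identity for algebraicity (conceding that the degenerate case $\tilde P(\boldsymbol{X}^*,\,\cdot\,)\equiv 0$ needs a Newton-polytope iteration you do not carry out), and then appeal to an effectivized Lojasiewicz inequality for the rate (whose constants you also do not produce). The paper avoids all of this by reusing the part~(i) trick one level up. Starting from a nonzero polynomial $P$ with $P(\boldsymbol{x},g(\boldsymbol{x}))=0$ on $D$, group the $\boldsymbol{x}$-monomials of $P$ by the value of $\sum_i \alpha_i r_i$ exactly as you did in part~(i), but now keeping $y$ symbolic:
\[
P(e^{\boldsymbol{r}t},y)=\sum_{k=1}^m Q_k(y)\,e^{\beta_k t},\qquad \beta_1>\cdots>\beta_m,
\]
with each $Q_k$ a nonzero univariate polynomial in $y$. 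Setting $y=g(e^{\boldsymbol{r}t})$ and dividing by $e^{\beta_1 t}$ gives, since $g$ is bounded, $|Q_1(g(e^{\boldsymbol{r}t}))|\le M e^{(\beta_2-\beta_1)t}$, whence $g(e^{\boldsymbol{r}t})$ lies within $(Me^{(\beta_2-\beta_1)t})^{1/d}$ of the zero set of $Q_1$, where $d=\deg Q_1$. Algebraicity of $g^*$ and the exponential rate fall out in one stroke; part~(i), applied to the semi-algebraic regions separating the finitely many roots of $Q_1$, pins down which root is the limit. Your ``main obstacle'' never arises because $Q_1\not\equiv 0$ by construction, and neither Wilkie's theorem nor a Lojasiewicz inequality is needed.
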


\subsection{Two Linearly Independent Frequencies}
The following lemma, which is a reformulation of \cite[Lemma
13]{pisot}, plays an instrumental role in this section.  The lemma
itself relies on a powerful quantitative result in transcendence
theory---Baker's Theorem on linear forms in logarithms of algebraic
numbers~\cite{baker}.
\begin{lemma}\label{lem:twoCosBaker}
Let $b_1,b_2$ be real algebraic numbers,
  linearly independent over $\rats$.  Furthermore, let
  $\varphi_1,\varphi_2$ be real numbers such that $e^{i\varphi_1}$ and
  $e^{i\varphi_2}$ are algebraic.  Then there exist effectively
  computable constants $N,T>0$ such that for all $t\geq T$ and all
  $k_1,k_2 \in \mathbb{Z}$, at least one of
  $|b_1t-\varphi_1-2k_1\pi| > 1/t^N$ and
  $|b_2t-\varphi_2-2k_2\pi| > 1/t^N$ holds.
\end{lemma}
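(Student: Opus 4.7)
The plan is to proceed by contradiction and reduce the statement to Baker's theorem on linear forms in logarithms of algebraic numbers with algebraic coefficients. Suppose some $t \geq T$ and $k_1, k_2 \in \mathbb{Z}$ violate the conclusion, so that both $|b_1t - \varphi_1 - 2k_1\pi| \leq 1/t^N$ and $|b_2t - \varphi_2 - 2k_2\pi| \leq 1/t^N$ hold. The first step is to eliminate the variable $t$ by taking $b_2$ times the first inequality minus $b_1$ times the second, yielding
\[ |\Lambda| \leq (|b_1|+|b_2|)/t^N, \quad \text{where } \Lambda := b_2(\varphi_1+2k_1\pi) - b_1(\varphi_2+2k_2\pi). \]
The hypothesis also forces $|k_j| = O(t)$ for $t \geq T$ effectively large, so the integer parameters are polynomially bounded in $t$.

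The next step is to re-express $\Lambda$ as a linear form in logarithms. Setting $\alpha_j := e^{i\varphi_j}$, which are algebraic by hypothesis, and using the principal branch, write $\log\alpha_j = i\varphi_j - 2\pi i\, n_j$ for some fixed integer $n_j$ determined by $\varphi_j$. Then, using also $\log(-1) = i\pi$, a direct calculation yields
\[ i\Lambda = b_2\log\alpha_1 - b_1\log\alpha_2 + M \log(-1), \]
where $M = 2b_2(n_1+k_1) - 2b_1(n_2+k_2)$ is an algebraic number whose height is polynomially bounded in $t$. Provided $i\Lambda \neq 0$, Baker's theorem on linear forms in logarithms with algebraic coefficients gives an effective lower bound $|\Lambda| \geq C_1 t^{-\kappa}$ for effective constants $C_1, \kappa > 0$. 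Combining with the upper bound, the inequality $C_1 t^{-\kappa} \leq (|b_1|+|b_2|)\, t^{-N}$ must hold, which fails once $t$ exceeds an effective $T_0$, provided $N$ is chosen with $N > \kappa$.

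It remains to address the degenerate case $i\Lambda = 0$. This case forces a nontrivial $\mathbb{Q}$-linear relation among $\log\alpha_1$, $\log\alpha_2$, and $\log(-1)$, equivalently a multiplicative relation of the form $\alpha_1^{c_1}\alpha_2^{c_2} = \pm 1$ for some integers $c_1, c_2$. The $\mathbb{Q}$-linear independence of $b_1, b_2$ implies that at most one pair $(k_1^*, k_2^*) \in \mathbb{Z}^2$ can make $\Lambda$ vanish, and for this pair the original inequality forces $t$ within distance $O(1/t^N)$ of the fixed real number $t^* := (\varphi_1+2k_1^*\pi)/b_1$; enlarging $T$ beyond $|t^*|+1$ then suffices. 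The main obstacle I anticipate lies precisely in making this last step effective: one must effectively detect whether such a multiplicative relation among the $\alpha_j$ exists and, if so, compute the relevant pair $(k_1^*, k_2^*)$ and the associated $t^*$. This is handled by standard effective algorithms for multiplicative dependencies among algebraic numbers, which themselves rest on Baker-type height estimates, so all constants $N$ and $T$ produced by the argument are effectively computable.
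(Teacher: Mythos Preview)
Your approach is correct and is precisely the one the paper points to: the paper does not give its own proof of the lemma but states that it is a reformulation of \cite[Lemma~13]{pisot} and that it rests on Baker's Theorem on linear forms in logarithms.  Your elimination of $t$ followed by an application of Baker's bound to the resulting form $i\Lambda = b_2\log\alpha_1 - b_1\log\alpha_2 + M\log(-1)$, with $M$ of height polynomial in $t$, is exactly the expected argument.

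One remark on the degenerate case.  Your claim that $\Lambda=0$ forces a $\mathbb{Q}$-linear relation among $\log\alpha_1,\log\alpha_2,i\pi$ is correct, but it already uses the qualitative form of Baker's theorem (that $\mathbb{Q}$-linear independence of logarithms of algebraic numbers implies $\overline{\mathbb{Q}}$-linear independence), since the coefficients $b_2,-b_1,M$ are algebraic rather than rational; you might make that step explicit.  In fact the analysis can be pushed further and simplified: if the logs span a $\mathbb{Q}$-space of rank $2$, then expressing one of $\ell_1,\ell_2,\ell_3$ in terms of the other two and using the $\mathbb{Q}$-independence of $b_1,b_2$ shows that $\Lambda$ can never vanish.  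Thus $\Lambda=0$ is only possible in the rank-$1$ case, i.e., when $\varphi_1,\varphi_2$ are rational multiples of $\pi$, and even then only when $\varphi_1,\varphi_2\in 2\pi\mathbb{Z}$ (equivalently $\alpha_1=\alpha_2=1$), in which case the unique pair $(k_1^*,k_2^*)$ gives $t^*=0$.  So the effectivity concern you flag is resolved very cheaply: no search over multiplicative relations is actually needed beyond checking whether $e^{i\varphi_1}=e^{i\varphi_2}=1$.
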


 The main result of the section is the following.
\begin{theorem}\label{thm:mtargument2}
  Let
  $f(t) = \sum_{j=1}^n e^{r_jt} \left(a_{1,j} \cos(\omega_j t) +
    a_{2,j} \sin(\omega_j t)\right)$
  be an exponential polynomial where $r_j,a_{1,j},a_{2,j},\omega_j$
  are real algebraic numbers and the $\mathbb{Q}$-span of
  $\{ \omega_1, \ldots , \omega_n \}$ has dimension two as a
  $\mathbb{Q}$-vector space.  Then we can decide whether or not
  $\{t\in \mathbb{R}_{\geq 0}:f(t)=0\}$ is bounded and, if bounded, we
  can compute an integer $T$ such that
  $\{t\in \mathbb{R}_{\geq 0}:f(t)=0\}\subseteq [0,T]$.
\end{theorem}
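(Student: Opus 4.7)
The plan is to peel off the largest exponential rate, recast the dominant oscillatory factor as a real-algebraic polynomial on the 2-torus, and then analyse it by semi-algebraic case analysis, invoking Lemma~\ref{lem:twoCosBaker} exactly where the orbit must be shown to stay polynomially far from an algebraic zero. First I would choose a $\rats$-basis $\{b_1,b_2\}$ of $\mathrm{span}_{\rats}\{\omega_1,\ldots,\omega_n\}$, rescaled so that every $\omega_j\in\mathbb{Z}b_1+\mathbb{Z}b_2$. Using angle-sum and Chebyshev identities one obtains the effectively computable representation
\[
  f(t)\;=\;\sum_{k=1}^m e^{\rho_k t}\,H_k\bigl(\cos b_1t,\sin b_1t,\cos b_2t,\sin b_2t\bigr),
\]
where $\rho_1>\cdots>\rho_m$ are the distinct values of $r_j$ and each $H_k$ is a polynomial with real algebraic coefficients. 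After absorbing any identically-vanishing top term we may assume $H_1$ is not identically zero on $\mathbb{T}^2:=\{(x_1,y_1,x_2,y_2):x_j^2+y_j^2=1\}$. Writing $h_k(t):=H_k(\Phi(t))$ for the orbit $\Phi(t):=(\cos b_1t,\sin b_1t,\cos b_2t,\sin b_2t)$ and $R(t):=\sum_{k\geq 2}e^{(\rho_k-\rho_1)t}h_k(t)$, we obtain $f(t)=e^{\rho_1 t}(h_1(t)+R(t))$ with an effective bound $|R(t)|\leq A e^{-\delta t}$, where $\delta:=\rho_1-\rho_2>0$ and $A$ is computable.

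Next I would apply the Tarski--Seidenberg theorem to $H_1$ on the semi-algebraic set $\mathbb{T}^2$, dividing into three decidable cases. In \emph{Case A}, $|H_1|\geq\varepsilon$ uniformly on $\mathbb{T}^2$ for some computable $\varepsilon>0$: then $f(t)\neq 0$ for all $t>\tfrac{1}{\delta}\log(A/\varepsilon)$ and the zero set is bounded. In \emph{Case B}, $H_1$ takes both strictly positive and strictly negative values: since $b_1,b_2$ are $\rats$-linearly independent, the Kronecker--Weyl theorem yields density of $\Phi$ in $\mathbb{T}^2$, and $\Phi$ visits both sign regions for arbitrarily large $t$; combined with $R(t)\to 0$, this forces $h_1+R$ to change sign infinitely often, so $f$ has infinitely many zeros and the zero set is unbounded. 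In \emph{Case C}, $H_1\geq 0$ (say) with nonempty zero set $Z\subseteq\mathbb{T}^2$. For each \emph{isolated} zero $p=(\cos\varphi_1,\sin\varphi_1,\cos\varphi_2,\sin\varphi_2)\in Z$, the coordinates and hence $e^{i\varphi_j}$ are algebraic, so Lemma~\ref{lem:twoCosBaker} yields an effective polynomial lower bound $d(\Phi(t),p)>1/t^N$ on the torus past a computable threshold. Combined with a \L{}ojasiewicz-type local estimate $H_1(x)\geq c\|x-p\|^K$ this gives $h_1(t)\geq c't^{-NK}$ eventually, a polynomial lower bound that dominates the exponential $|R(t)|$, again producing a computable bound~$T$ beyond which $f$ has no zero.

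The principal obstacle is the sub-case of Case~C in which $Z$ contains one-dimensional algebraic components, along which both $H_1$ and $\nabla H_1$ vanish. At each transverse orbit crossing $t_0$ with such a component, $h_1(t)\sim c(t-t_0)^2$ while $R(t)\approx e^{-\delta t_0}h_2(t_0)$, so whether $\tilde{f}:=h_1+R$ has a zero near $t_0$ is governed by the sign of $h_2(t_0)$. Since the orbit projects densely onto $Z$ by Kronecker--Weyl, deciding whether infinitely many crossings yield a zero of $f$ reduces, by Tarski--Seidenberg, to whether the semi-algebraic set $\{H_2<0\}\cap Z\subseteq\mathbb{T}^2$ is nonempty: if it is nonempty the zero set of $f$ is unbounded, otherwise one recurses with $H_3$ playing the role of $H_2$, invoking Lemma~\ref{lem:twoCosBaker} again to bound the orbit's approach to any newly-introduced algebraic tangency points. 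Assembling the three cases yields the desired decision procedure; the final bound~$T$ is controlled by the Baker-theoretic constants, the cell-decomposition bounds on $\mathbb{T}^2$, and the spectral gap~$\delta$.
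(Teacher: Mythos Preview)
Your approach is genuinely different from the paper's.  The paper retains all the exponentials $e^{r_jt}$ as extra semi-algebraic coordinates, forms the zero locus $E\subseteq\mathbb{R}^{n+2}$, cell-decomposes it, and then for each cell uses Lemma~\ref{lem:combine} to extract \emph{algebraic} limits $g_i^*,h_i^*$ of the cell-boundary functions along the curve $t\mapsto e^{\vr t}$.  The dichotomy is then whether the limiting cell collapses to a point (in which case Lemma~\ref{lem:twoCosBaker} gives an effective bound) or retains one-dimensional extent (in which case Kronecker density together with the Gelfond--Schneider Corollary~\ref{corl:gelfond} force infinitely many zeros).  You instead peel off the dominant rate $\rho_1$ and analyse the leading torus polynomial $H_1$ directly, treating the remaining rates as an exponentially small perturbation~$R$.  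Your Cases~A and~B are fine, and so is the isolated-zero portion of Case~C: there Baker's lemma plus an effective \L{}ojasiewicz estimate indeed make $h_1$ dominate $R$ past a computable threshold.

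The gap is the one-dimensional branch of Case~C.  The recursion ``$H_3$ playing the role of $H_2$'' is not the right shape.  If $H_2>0$ everywhere on $Z$ one must conclude \emph{bounded} directly (no recursion), and the missing argument is that $h_1+R>0$ past some $T$: one needs that $h_1\geq 0$ globally, and that on a fixed neighbourhood of $Z$ one has $R(t)\approx e^{-\delta t}h_2(t)>0$ once the $e^{-\delta' t}$ terms are absorbed.  If instead $H_2\geq 0$ on $Z$ with a nonempty zero set $Z'=Z\cap\{H_2=0\}$, the correct recursion descends to the strictly smaller stratum $Z'$ and must invoke Lemma~\ref{lem:twoCosBaker} for the (algebraic, zero-dimensional) points of $Z'$ to control how close orbit-crossings of $Z$ can come to $Z'$; only then can $e^{-\delta t}h_2(t_0)$ be shown to dominate $e^{-\delta' t}h_3(t_0)$ at the relevant crossings.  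Iterating produces a tower of nested \L{}ojasiewicz and Baker estimates, one per exponential rate, on successively thinner tubes---and you have not set up or verified that cascade.  The paper's cell-decomposition packaging sidesteps this entirely: Lemma~\ref{lem:combine} absorbs all exponential scales at once into the exponential convergence of the boundary functions, leaving only a single Baker application (Case~I) or a single Gelfond--Schneider transversality check (Case~II) per cell.
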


\begin{proof}
  Let $b_1,b_2$ be real algebraic numbers, linearly independent over
  $\mathbb{Q}$, such that $\omega_j$ is an integer linear combination
  of $b_1$ and $b_2$ for $j=1,\ldots,n$.  For each $n\in \mathbb{Z}$,
  $\sin(nb_1t)$ and $\cos(nb_1t)$ can be written as polynomials in
  $\sin(b_1t)$ and $\cos(b_1t)$ with integer coefficients, and
  similarly for $b_2$.  It follows that we can write $f$ in the form
\[ f(t) = Q(e^{r_1 t},\ldots, e^{r_nt},\cos(b_1t),\sin(b_1t),
\cos(b_2t),\sin(b_2t)) \]
for some polynomial $Q$ with real algebraic coefficients that
is computable from $f$.

Write
$R_{++} = \{t\geq 0:\sin(b_1t)\geq 0 \wedge \sin(b_2t)\geq 0 \}$,
$R_{+-} = \{t\geq 0: \sin(b_1t)\geq 0 \wedge \sin(b_2t)\leq 0 \}$, and
likewise define $R_{- +}$, $R_{- -}$ for the two remaining sign
conditions on $\sin(b_1t)$ and $\sin(b_2t)$. We show how to decide boundedness of
$\{ t\in R_{++} :f(t)=0\}$.  (The cases for $R_{+ -}$, $R_{- +}$, and
$R_{- -}$ follow \emph{mutatis mutandis}.)  The idea is to compute a
partition of $\{ t \in R_{+ +} : f(t)=0\}$ into components
$Z_1,\ldots,Z_m$ and to separately decide boundedness of each component $Z_j$.

Define a semi-algebraic set 
\[ E = \big\{(\boldsymbol{u},x_1,x_2) \in \mathbb{R}^{n+2}: \exists
y_1,y_2\geq 0 \left(x_1^2+y_1^2=x_2^2+y_2^2=1 \wedge
  Q(\boldsymbol{u},x_1,y_1,x_2,y_2) = 0 \right)\big \} \, . \]
Then for $t\in R_{+ +}$ we have $f(t)=0$ if and only if
$(e^{\vr t},\cos(b_1t),\cos(b_2t)) \in E$, where $\vr = (r_1,\ldots,r_n)$. 
Now consider a cell
decomposition $E=C_1 \cup \ldots \cup C_m$ for cells
$C_1,\ldots,C_m \subseteq \mathbb{R}^{n+2}$, and define
\begin{gather} Z_j = \{ t\in R_{+ +} : (e^{\vr
  t},\cos(b_1t),\cos(b_2t)) \in C_j \}\, , \qquad j=1,\ldots,m,
\label{def:Zj}
\end{gather}
Then $\{ t \in R_{+ +} : f(t)=0\} =
Z_1 \cup \ldots \cup Z_m$.

Now fix $j\in\{1,\ldots,m\}$.  We show how to decide boundedness of
$Z_j$.  To this end, write $D_j \subseteq \mathbb{R}^{n}$ for the
projection of the corresponding cell $C_j \subseteq \mathbb{R}^{n+2}$
on the first $n$ coordinates.

First suppose that $\{ t \in \mathbb{R} : e^{\vr t} \in D_j \}$ is
bounded.  Then by Lemma~\ref{lem:combine} we can compute an
upper bound $T$ of this set.  But $Z_j \subseteq \{ t \in
\mathbb{R}_{\geq 0} : e^{\vr t} \in D_j\}$ and so $Z_j \subseteq
       [0,T]$.

On the other hand, suppose that $\{ t \in \mathbb{R} : e^{\vr t} \in
D_j \}$ is unbounded.  Then, by Lemma~\ref{lem:combine}, this
set contains an unbounded interval $(T,\infty)$ for some
$T\in\mathbb{N}$.  Write $I=[-1,1]$ and define functions
$g_1,g_2,h_1,h_2 : D_j \rightarrow \mathbb{R}$ by
\begin{align}
\label{eq:boundary1}
g_1(\boldsymbol{u}) & =  \inf \{ x\in I : \exists y \, (\boldsymbol{u},x,y) \in C_j\} \qquad
  & g_2(\boldsymbol{u}) &  =  \inf \{ y\in I  : \exists x \, (\boldsymbol{u},x,y) \in C_j\} \\
h_1(\boldsymbol{u}) & =  \sup \{ x\in I: \exists y \, (\boldsymbol{u},x,y) \in C_j\} \qquad
 & h_2(\boldsymbol{u}) & =  \sup \{ y\in I : \exists x \, (\boldsymbol{u},x,y) \in C_j\} 
\label{eq:boundary2}
\end{align}
  These functions are all semi-algebraic by quantifier elimination.  Hence
  by Lemma~\ref{lem:combine} the limits
  $g_i^* = \lim_{t\rightarrow \infty} g_i(e^{\vr t})$ and
  $h_i^* = \lim_{t\rightarrow \infty} h_i(e^{\vr t})$ exist for $i=1,2$ and are
  algebraic numbers.  Clearly we have $g_1^*\leq h_1^*$ and
  $g_2^*\leq h_2^*$.  We now consider three cases according to the
  strictness of these inequalities.

\subsubsection{Case I: $g_1^*= h_1^*$ and $g_2^*=h_2^*$.}  
We show that $Z_j$ is bounded and that we can
compute $T_2$ such that $Z_j\subseteq [0,T_2]$.

By Lemma~\ref{lem:combine} there exist $T_1,\varepsilon>0$ such
that for all $t>T_1$ and $i=1,2$, 
\begin{gather}
  |g_i(e^{\vr t}) - g_i^*| < e^{-\varepsilon t} \mbox{ and }
  |h_i(e^{\vr t}) - h_i^*| < e^{-\varepsilon t} \, .
\label{eq:bounds}
\end{gather}

Then for $t \in R_{+ +}$ such that $t>T_1$ we have
\begin{eqnarray}
t\in Z_j & \Longleftrightarrow &
\left(e^{\vr t},\cos(b_1t),\cos(b_2t)\right) \in C_j \;\;\mbox{ (by (\ref{def:Zj}))}\notag \\
       & \Longrightarrow & 
       g_1(e^{\vr t}) \leq \cos(b_1t) \leq h_1(e^{\vr t}) \,\mbox{ and }\,
       g_2(e^{\vr t}) \leq \cos(b_2t) \leq h_2(e^{\vr t}) 
\;\;\mbox{ (by (\ref{eq:boundary1})(\ref{eq:boundary2}))}
\notag \\
&\Longrightarrow &
\left| \cos(b_1t) - g_1^* \right| < e^{-\varepsilon t} \mbox{ and }
\left| \cos(b_2t) - g_2^* \right| < e^{-\varepsilon t} 
\;\;\mbox{ (by (\ref{eq:bounds}))}  \label{eq:just}
\end{eqnarray}

Write $g_1^* = \cos(\varphi_1)$ and $g_2^*=\cos(\varphi_2)$ for some
$\varphi_1,\varphi_2\in [0,\pi]$.  Since
$|\cos(\varphi_1+x)-\cos(\varphi_1)| \geq x^3/3$ for all $x$
sufficiently small (by a Taylor expansion), the inequality
(\ref{eq:just}) implies that for some $k_1,k_2 \in \mathbb{Z}$,
\begin{gather} |b_1t - \varphi_1 - 2k_1\pi| < 3e^{-\varepsilon t/3} \,\mbox{ and }\,
|b_2t - \varphi_2 - 2k_2\pi| < 3e^{-\varepsilon t/3} \, .
\label{eq:upper}
\end{gather}
Combining the upper bounds in (\ref{eq:upper}) with the polynomial lower
bounds $|b_1t-\varphi_1-2k_1\pi| > 1/t^N$ and
$|b_2t-\varphi_2-2k_2\pi| > 1/t^N$ from Lemma~\ref{lem:twoCosBaker} we
obtain an effective bound $T_2$ for which $t\in Z_j$ implies $t<T_2$.

\subsubsection{Case II: $g_1^*< h_1^*$.}  In this case we show that $Z_j$
is unbounded.  The geometric intuition is as follows.  We imagine a
particle in the plane whose position at time $t$ is
$(\cos(b_1t),\cos(b_2t))$, together with a ``moving target'' whose
extent at time $t$ is $\Gamma_t=\{(x,y) : (e^{\vr t},x,y) \in C_j \}$.
Below we essentially argue that such a particle is bound to hit
$\Gamma_t$ at some time $t$ since its orbit is dense in $[-1,+1]^2$ and
$\Gamma_t$ has positive dimension in the limit.

Proceeding formally, first notice that $C_j$ cannot be a
$(\ldots,0,1)$-cell or a $(\ldots,0,0)$-cell, for then we would have
$g_1(\boldsymbol{u}) = h_1(\boldsymbol{u})$ for all $\boldsymbol{u}\in
D_j$ and hence $g_1^*=h_1^*$.  Thus $C_j$ must either be a
$(\ldots,1,0)$-cell or a $(\ldots,1,1)$-cell.  In either case, $C_j$
includes a cell of the form
$\{(\boldsymbol{u},x,\xi(\boldsymbol{u},x)) : \boldsymbol{u}\in D,
g_1(\boldsymbol{u}) < x < h_1(\boldsymbol{u}) \}$ for some
semi-algebraic function $\xi$.

Let $c,d$ be real algebraic numbers such that $g_1^*<c<d<h_1^*$.
Write $c=\cos(\psi')$ and $d=\cos(\psi)$ for $0\leq \psi<\psi' \leq
\pi$.  By Lemma~\ref{lem:combine} the limits
$\lim_{t\rightarrow\infty} \xi(e^{\vr t},c)$ and
$\lim_{t\rightarrow\infty} \xi(e^{\vr t},d)$ exist and are algebraic
numbers in the interval $[-1,1]$.  Let $\theta,\theta' \in [0,\pi]$ be
such that $\cos(\theta)=\lim_{t\rightarrow\infty} \xi(e^{\vr t},d)$
and $\cos(\theta') = \lim_{t\rightarrow\infty} \xi(e^{\vr t},c)$.

By Corollary~\ref{corl:gelfond} we know that
$\frac{\theta'-\theta}{\psi'-\psi}$ is either rational or
transcendental.  In particular we know that it is not equal to
$\frac{b_2}{b_1}$, which is algebraic and irrational.  Let us
suppose that $\frac{\theta'-\theta}{\psi'-\psi} > \frac{b_2}{b_1}$
(the converse case is almost identical).  Then there exists $\theta''$
with $\theta<\theta''<\theta'$, such that
\begin{gather}
\theta< \theta'' + \frac{b_2}{b_1}(\psi'-\psi) < \theta' \, . 
\label{eq:limit} 
\end{gather}

Since $2\pi,b_1,b_2$ are linearly independent over $\mathbb{Q}$ it
follows from Kronecker's approximation theorem that
$\{ (b_1t,b_2t)\bmod 2\pi : t \in \mathbb{R}_{\geq 0} \}$ is dense in
$[0,2\pi)^2$ (see~\cite[Chapter 23]{HardyW38}).  Thus there
is an increasing sequence $t_1<t_2< \ldots$, with
$b_1t_n \equiv \psi \bmod 2\pi$ for all $n$, such that $b_2t_n \bmod 2\pi$
converges to $\theta''$.  Then, defining
$s_1<s_2< \ldots$ by $s_n = t_n + \frac{\psi'-\psi}{b_1}$, we have
$b_1s_n \equiv \psi' \bmod 2\pi$ for all $n$ and, by (\ref{eq:limit}),
\[ 
\lim_{n \rightarrow \infty} b_2s_n  =
\lim_{n\rightarrow \infty} b_2t_n +\frac{b_2}{b_1}(\psi'-\psi) =
\theta'' + \frac{b_2}{b_1}(\psi'-\psi) < \theta' \quad(\bmod\; 2\pi) \]

Let $\eta(t)=\xi(e^{\vr t},\cos(b_1t)) - \cos(b_2t)$.  Then
for $t\in R_{++}$ such that $g(e^{\vr t})<\cos(b_1t)<h(e^{\vr t})$,
\begin{eqnarray*}
\eta(t)=0&\Longrightarrow & \cos(b_2t)=\xi(e^{\vr t},\cos(b_1t))\\
         &\Longrightarrow & (e^{\vr t},\cos(b_1t),\cos(b_2t)) \in C_j\\
&\Longrightarrow & t\in Z_j \mbox{ (by (\ref{def:Zj}))} \, .
\end{eqnarray*}
Now
$\lim_{n\rightarrow \infty}\eta(t_n)=\cos(\theta)-\cos(\theta'')>0$
and
$\lim_{n\rightarrow \infty} \eta(s_n)<\cos(\theta')-\cos(\theta')=0$.
Moreover for $n$ sufficiently large we have
$[t_n,s_n]\subseteq R_{++}$.  It follows that $\eta(t)$ has a
zero in every interval $[t_n,s_n]$ for $n$ large enough.  We conclude
that $Z_j$ is unbounded.

\subsubsection{Case III: $g_2^* < h_2^*$.}  This case is symmetric to
Case II and we omit details.
\end{proof}
\subparagraph*{Acknowledgements} The authors wish to thank Angus
Macintyre for helpful comments.  Ventsislav Chonev is supported by
Austrian Science Fund (FWF) NFN Grant No S11407-N23 (RiSE/SHiNE), ERC
Start grant (279307: Graph Games), and ERC Advanced Grant (267989:
QUAREM).  Jo\"{e}l Ouaknine is supported by ERC grant AVS-ISS
(648701).  James Worrell is supported by EPSRC grant
EP/N008197/1.

\bibliographystyle{abbrv}
\bibliography{bounded}

\newpage
\appendix

\section{Hardness}
\label{sec:hardness}
In this section we show that decidability of the Continuous Skolem
Problem entails significant new effectiveness results in Diophantine
approximation, thereby identifying a formidable mathematical obstacle
to further progress in the unbounded case.

Diophantine approximation is a branch of number theory concerned with
approximating real numbers by rationals.  A central role is played in
this theory by the notion of \emph{continued fraction expansion},
which allows to compute a sequence of rational approximations to a
given real number that is optimal in a certain well-defined sense.
For our purposes it suffices to note that the behaviour of the
simple continued fraction expansion of a real number $a$ is closely related
to the \emph{(homogeneous Diophantine approximation) type} of $a$,
which is defined to be
\[
L(a) \defn \inf\left\{ c : \left| a - \frac{n}{m} \right| < \frac{c}{m^2}
\mbox{ for some $m,n\in\mathbb{Z}$}\right\}  \, .
\]
Let $[n_1,n_2,n_3,\ldots]$ be the sequence of partial quotients in the 
simple continued fraction expansion of $a$.
Then, writing $K(a):=\sup_{k\geq 0} n_k$,
it is shown in~\cite[pp.\ 22-23]{Schmidt80} that $L(a)=0$ if and only
if $K(a)$ is infinite and otherwise
\[ K(a) \leq L(a)^{-1} \leq K(a)+2 \, .\]

It is well known that a real number algebraic number of degree two
over the rationals has a simple continued fraction expansion that is
ultimately periodic. In particular, such numbers have bounded partial
quotients.  But nothing is known about real algebraic numbers of
degree three or more---no example is known with bounded partial
quotients, nor with unbounded quotients. Guy~\cite{Guy04} asks:
\begin{quote}
  \emph{Is there an algebraic number of degree greater than two whose simple
  continued fraction expansion has unbounded partial quotients?  Does
  every such number have unbounded partial quotients?}
\end{quote}
In other words, the question is whether there is a real algebraic
number $a$ of degree at least three such that $L(a)$ is strictly
positive, or whether $L(a)=0$ for all such $a$.

Recall that a real number $x$ is \emph{computable} if there is an
algorithm which, given any rational $\varepsilon>0$ as input, returns
a rational $q$ such that $|q-x|<\varepsilon$.  The main result of this
section is Theorem~\ref{thm:hard}, which shows that the existence of a
decision procedure for the general Continuous Skolem Problem entails
the computability of $L(a)$ for all real algebraic numbers $a$.  Now
one possibility is that all such numbers $L(a)$ are zero, and hence
trivially computable.  However the significance of
Theorem~\ref{thm:hard} is that in order to prove the decidability of
the Continuous Skolem Problem one would have to establish, \emph{one
  way or another}, the computability of $L(a)$ for every real
algebraic number $a$.

Fix positive real algebraic $a, c$ and define the functions:
\begin{eqnarray*}
f_1(t) & = e^t(1-\cos(t)) + t (1 - \cos(a t)) - c\sin(a t), & \\
f_2(t) & = e^t(1-\cos(t)) + t (1 - \cos(a t)) + c\sin(a t), & \\
f(t) & = e^t(1-\cos(t)) + t (1 - \cos(a t)) - c|\sin(a t)|  & = 
\min\{f_1(t),f_2(t)\}.
\end{eqnarray*}
Then $f_1(t)$ and $f_2(t)$ are exponential polynomials.  Moreover it
is easy to check that the function $f(t)$ has a zero in an interval of
the form $(T,\infty)$ if and only if at least one of $f_1(t)$,
$f_2(t)$ has a zero in $(T,\infty)$.  

We will first prove two lemmas which show a connection between the
existence zeros of $f(t)$ and the type $L(a)$.  We then will
derive an algorithm to compute $L(a)$ using an oracle for the
Continuous Skolem Problem, thereby demonstrating our desired
hardness result.

\begin{lemma}\label{hardForward}
Fix real algebraic $a, c$ and $\varepsilon\in\mathbb{Q}$ with $a,c>0$
and $\varepsilon\in(0, 1)$. There exists an effective threshold $T$,
dependent on $a, c, \varepsilon$, such that if $f(t)=0$ for some
$t\geq T$, then $L(a)\leq c/2\pi^2(1-\varepsilon)$.
\end{lemma}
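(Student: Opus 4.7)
The plan is to extract from a single zero $f(t)=0$ with $t \geq T$ a rational approximation $n/k$ to $a$ of quality matching the claimed bound, and then to choose $T$ effectively so that lower-order error terms are absorbed into the $(1-\varepsilon)^{-1}$ slack. The key observation is that both summands $e^t(1-\cos t)$ and $t(1-\cos(at))$ on the left of the identity
$$e^t(1-\cos t) + t(1-\cos(at)) = c|\sin(at)|$$
are non-negative, so each is bounded above by $c|\sin(at)| \leq c$.

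First I would handle the exponential term: writing $t = 2\pi k + \theta$ with $k \in \mathbb{N}$ and $|\theta|\leq\pi$, the inequality $e^t \cdot 2\sin^2(\theta/2) \leq c$ forces $|\theta| \leq 2\sqrt{c}\,e^{-t/2}$, so $t$ lies exponentially close to the integer multiple $2\pi k$. Next I would analyse $t(1-\cos(at)) \leq c|\sin(at)|$: writing $at = \pi m + \phi$ with $|\phi|\leq\pi/2$, the half-angle identities $1-\cos(at) = 2\sin^2(\phi/2)$ and $|\sin(at)| = 2|\sin(\phi/2)\cos(\phi/2)|$ reduce this to $t\,|\tan(\phi/2)| \leq c$, whence $|\phi| \leq 2c/t$. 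A separate check rules out $m$ odd: in that case $1-\cos(at) = 1+\cos\phi \geq 1$ while $|\sin(at)|\leq 1$, forcing $t\leq c$; this determines one component of the threshold $T$, below which the odd branch is impossible. Thus $m = 2n$ is even.

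Combining, $t = 2\pi k + \theta$ and $at = 2\pi n + \phi$ yield $2\pi(ak - n) = \phi - a\theta$, so
$$\left|a - \frac{n}{k}\right| = \frac{|\phi - a\theta|}{2\pi k} \leq \frac{|\phi| + a|\theta|}{2\pi k}.$$
Substituting $|\phi| \leq 2c/t$ and using $1/t \leq 1/(2\pi k - |\theta|)$ gives $|\phi|/(2\pi k) \leq c/(2\pi^2 k^2) \cdot (1 - |\theta|/(2\pi k))^{-1}$, while $a|\theta|/(2\pi k)$ is of order $e^{-t/2}/k$, which is negligible compared to $c/(2\pi^2 k^2)$ for $k \geq 1$ and $t$ large. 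Putting this together,
$$\left|a - \frac{n}{k}\right| \leq \frac{c}{2\pi^2 k^2}\bigl(1 + \eta(t)\bigr),$$
where $\eta(t) \to 0$ effectively as $t \to \infty$. Choosing $T$ so that $t\geq c$, so that $k \geq 1$, and so that $1 + \eta(t) \leq 1/(1-\varepsilon)$ for all $t \geq T$ — each of which is effective given $a, c, \varepsilon$ — produces the witness $(n,k)$ with $|a - n/k| \leq c/(2\pi^2(1-\varepsilon) k^2)$, establishing the desired bound on $L(a)$.

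The main technical obstacle will be the bookkeeping between the two error scales: $\theta$ is exponentially small in $t$ whereas $\phi$ is only of order $1/t$, yet the factor $(1-\varepsilon)^{-1}$ is tight and must simultaneously absorb the polynomial slack in the bound $|\phi| \leq 2c/t$ (versus the idealised $c/(\pi k)$) and the exponentially small cross-term $a|\theta|$. Making $T$ depend effectively on $a$, $c$, $\varepsilon$ in a way that guarantees both contributions fall within the allowed slack is the only delicate step; the trigonometric estimates themselves are elementary.
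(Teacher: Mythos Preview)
Your proposal is correct and follows essentially the same route as the paper: both arguments split $f(t)=0$ into the two inequalities $e^t(1-\cos t)\leq c$ and $t(1-\cos(at))\leq c|\sin(at)|$, extract from the first that $t$ is exponentially close to $2\pi k$ and from the second that $at$ is $O(1/t)$-close to $2\pi n$, and then combine via the triangle inequality to bound $|a-n/k|$ with the $(1-\varepsilon)^{-1}$ slack absorbing the lower-order terms. Your tangent half-angle step $t|\tan(\phi/2)|\leq c$ is a minor variant of the paper's quadratic lower bound $\alpha x^2/2\leq 1-\cos x$, but the structure is identical.
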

\begin{proof}
Suppose $f(t) = 0$ for some $t\geq T$. Define $\delta_1= t- 2\pi m$ 
and $\delta_2 = a t - 2\pi n$, where $m, n \in \mathbb{N}$ and 
$\delta_1,\delta_2\in [-\pi,\pi)$.  Then we have
\[
\left| a - \frac{n}{m} \right| = 
\frac{|\delta_2-a \delta_1|}{2\pi m}.
\]
We will show that for $T$ chosen large enough, if $f(t)=0$ for
$t\geq T$ then we can
bound $|\delta_2|$ and $|a\delta_1|$ separately from above
and then apply the triangle inequality to bound $|\delta_2 -a \delta_1|$, 
obtaining the desired upper bound on $L(a)$.  

Define $0<\alpha<1$ by $\alpha^2 = (1-\varepsilon^2)$.
Since $\displaystyle m\geq \frac{t-\pi}{2\pi} \geq \frac{T-\pi}{2\pi}$, for sufficiently large $T$ we have 
\begin{equation}\label{hardprop1}
t \geq 2\pi (m - 1) \geq 2\pi m\alpha \, .
\end{equation}
Furthermore, since $\alpha x^2/2 \leq 1-\cos(x)$ for $|x|$ sufficiently small, we may assume that $T$ is large enough such that the following is valid for $|x|\leq \pi$:
\begin{equation}\label{hardprop2}
\mbox{if $1-\cos(x)\leq c\pi /T$ then
$\alpha x^2/2 \leq 1-\cos(x)$.}
\end{equation}

We have the following chain of inequalities, where $(\ast)$ follows 
from $f(t)=0$ and $e^t(1-\cos(t))\geq 0$:
\begin{gather*}
1-\cos(\delta_2) = 1-\cos(at) \stackrel{(\ast)}{\leq} \frac{c|\sin(at)|}{t}
   = \frac{c|\sin(\delta_2)|}{t} \leq \frac{c|\delta_2|}{t} \, .
\label{eq:upper*}
\end{gather*}
It follows that $1-\cos(\delta_2) \leq c\pi/t$ and so by
   (\ref{hardprop2}) we also have
\begin{gather*}
\frac{\alpha \delta_2^2}{2} \leq 1-\cos(\delta_2) \, .
\label{eq:lower}
\end{gather*}
Combining the upper and lower bounds on $1-\cos(\delta_2)$ 
and using (\ref{hardprop1}), we have
\[
|\delta_2| \leq \frac{2c}{\alpha t} \leq 
\frac{2c}{2\pi m\alpha ^2} =
\frac{c}{m\pi(1-\varepsilon^2)}.
\]

We next seek an upper bound on $|\delta_1|$.  To this end,
let $T$ be large enough so that
\begin{equation}\label{hardprop3}
ce^{-t} \leq \left(\frac{c\varepsilon}{2a\alpha t}\right)^2 \;\mbox{ 
for $t\geq T$.}
\end{equation}
Then the following chain of inequalities holds:
\begin{align*}
\frac{\delta_1^2}{16} & \;\leq\; 1-\cos(\delta_1) 
& \mbox{ \{ valid for all $|\delta_1| \leq \pi$ \} } \\[2pt]
& \;=\; \frac{c|\sin(\delta_2)| - t(1-\cos(\delta_2))}{e^t} 
& \mbox{ \{ since $f(t) = 0$ \} } \\[2pt]
& \;\leq\; ce^{-t} 
& \mbox{ \{ since $|\sin(\delta_2)|, |\cos(\delta_2)| \leq 1$\}} \\ 
& \;\leq\; \left(\frac{c\varepsilon}{2a\alpha t}\right)^2 
&  \mbox{ \{ by (\ref{hardprop3}) \}} \\[2pt]
& \;\leq\; \left(\frac{c\varepsilon}{4a\pi \alpha^2 m}\right)^2 
& \mbox{ \{ by (\ref{hardprop1}) \}}
\end{align*}
It follows that
\[ |a\delta_1| \leq \frac{c\varepsilon}{\pi m(1-\varepsilon^2)} \, .\]

Finally, by the triangle inequality and the bounds on $|a \delta_1|$
and $|\delta_2|$,
we have
\[
\left| a - \frac{n}{m} \right| = \frac{|\delta_2 - a\delta_1|}{2\pi m}
\leq \frac{|\delta_2| + |a \delta_1|}{2\pi m} \leq 
\frac{c+c\varepsilon}{2\pi^2m^2(1-\varepsilon^2)} =
\frac{c}{2\pi^2 m^2(1-\varepsilon)} \, ,
\]
so the natural numbers $n, m$ witness $L(a)\leq c/2\pi^2(1-\varepsilon)$.
\end{proof}

\begin{lemma}
Fix real algebraic $a, c$ and $\varepsilon\in\mathbb{Q}$ with $a,c>0$ and
$\varepsilon\in(0, 1)$. There exists an effective threshold $M$,
dependent on $a, c, \varepsilon$, 
such that if $L(a) \leq c(1-\varepsilon)/2\pi^2$ holds and is witnessed
by natural numbers $n, m$ with $m \geq M$, then $f(t) = 0$ for some
$t \geq 2\pi M$.
\end{lemma}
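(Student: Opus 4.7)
My plan is to locate a zero of $f$ in the interval $(2\pi m, 2\pi m+1)$ by applying the intermediate value theorem to the continuous function $f=\min(f_1,f_2)$; since $2\pi m \geq 2\pi M$, such a zero will have the required form. I will aim to show that for all $m\geq M$, with $M$ to be effectively determined, $f(2\pi m)\leq 0$ while $f(2\pi m+1)>0$.

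At $t=2\pi m$ the term $e^t(1-\cos t)$ vanishes. Setting $\delta:=2\pi m(a-n/m)$, the hypothesis on $(n,m)$ translates to $|\delta|\leq c(1-\varepsilon)/(\pi m)$, which tends to zero as $m\to\infty$, and the expression reduces to
\[
f(2\pi m)=2\pi m(1-\cos\delta)-c|\sin\delta|.
\]
I would then apply the Taylor bounds $1-\cos\delta\leq\delta^2/2$ and $|\sin\delta|\geq|\delta|(1-\delta^2/6)$ in conjunction with $\pi m|\delta|\leq c(1-\varepsilon)$ to obtain an upper bound of the shape
\[
f(2\pi m)\leq|\delta|\bigl(-c\varepsilon + c\delta^2/6\bigr),
\]
which is non-positive whenever $\delta^2<6\varepsilon$. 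Since $|\delta|\to 0$ with $m$, the latter holds for all $m\geq M_1$ for a threshold $M_1$ computable from $c$ and $\varepsilon$.

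At $t=2\pi m+1$, the plan is reversed: here the first term $e^{2\pi m+1}(1-\cos 1)$ grows exponentially in $m$, whereas $|t(1-\cos(at))|\leq 2(2\pi m+1)$ and $c|\sin(at)|\leq c$ grow at most linearly. Hence there is an effectively computable $M_2$ such that $f(2\pi m+1)>0$ for all $m\geq M_2$. Taking $M:=\max(M_1,M_2)$ and invoking continuity of $f$ together with the intermediate value theorem on $[2\pi m,2\pi m+1]$ finishes the argument.

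The main obstacle is the first step: the slack $c\varepsilon$ afforded by the factor $(1-\varepsilon)$ in the hypothesis must precisely absorb the cubic Taylor remainder coming from $|\sin\delta|$, and this is what dictates the dependence of $M$ on $\varepsilon$. The second step is routine once one observes that over any bounded interval of length one the oscillatory contributions to $f$ are uniformly bounded while the exponential piece dominates for large $t$.
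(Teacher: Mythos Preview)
Your argument is correct and follows essentially the same route as the paper: evaluate $f$ at $t=2\pi m$, use the hypothesis together with Taylor bounds on $\sin$ and $\cos$ to show $f(2\pi m)\leq 0$, and conclude by the intermediate value theorem using a point where $f$ is positive. The paper phrases the sine lower bound as $(1-\varepsilon)|x|\leq|\sin x|$ for $|x|$ small (equivalent to yours after absorbing the $\delta^2/6$ term into the threshold) and, instead of pinpointing $2\pi m+1$, simply observes that $f$ is positive for arbitrarily large times; these are cosmetic differences rather than a different method.
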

\begin{proof}
Select $M$ large enough, so that $c(1-\varepsilon)/\pi M < \pi$ and
\begin{equation}\label{hardprop5}
\mbox{if $|x|< c(1-\varepsilon)/\pi M$, then 
$(1-\varepsilon)|x|\leq|\sin(x)|$.}
\end{equation}
Suppose now that $L(a) \leq c(1-\varepsilon)/2\pi^2$, let this be
witnessed by $n, m \in\mathbb{N}$ with $m\geq M$ and
define $t \defn 2\pi m$. We will show that $f(t) \leq 0$. This suffices,
because $f(t)$ is continuous and moreover is positive for arbitrarily
large times, so it must have a zero on $[t, \infty)$.

Since $L(a) \leq c(1-\varepsilon)/2\pi^2$, we have 
$|am-n| \leq c(1-\varepsilon)/2\pi^2m$. Therefore, we can write
$at = 2\pi a m = 2\pi n + \delta$ for some $\delta$ satisfying 
$|\delta| \leq c(1-\varepsilon)/\pi m < \pi$. We have
\begin{align*}
& f(t) & \\
= & \mbox{ \{ as $\cos(t)=1$ \} } & \\
& t(1-\cos(\delta)) - c|\sin(\delta)| & \\
\leq & \mbox{ \{ by (\ref{hardprop5}) and $1-\cos(x)\leq x^2/2$ \} } & \\
& \pi m\delta^2 - c(1-\varepsilon)|\delta| & \\
\leq & \mbox{ \{ by $|\delta|\leq c(1-\varepsilon)/\pi m$ \} } & \\
& 0. &
\end{align*}
\end{proof}
The following corollary is immediate:
\begin{lemma}\label{hardBackward}
Fix real algebraic $a, c$ and $\varepsilon\in\mathbb{Q}$ with $a,c>0$ and
$\varepsilon\in(0, 1)$. There exists an effective threshold $T$,
dependent on $a, c, \varepsilon$,
such that if $f(t)\neq 0$ for all $t\geq T$, then either
$L(a) < c(1-\varepsilon)/2\pi^2$ and this is witnessed by
natural numbers $n, m$ with $m < T/2\pi$, or 
$L(a)\geq c(1-\varepsilon)/2\pi^2$.
\end{lemma}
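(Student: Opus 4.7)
The plan is to derive this lemma as a straightforward contrapositive of the preceding one. Let $M$ denote the effective threshold supplied by the preceding lemma (applied to the same parameters $a$, $c$, $\varepsilon$), and set $T := 2\pi M$. Since $M$ was computable from $a,c,\varepsilon$, so is $T$.

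Assume the hypothesis: $f(t)\neq 0$ for all $t\geq T$. If $L(a)\geq c(1-\varepsilon)/2\pi^2$, we are in the second alternative of the conclusion and there is nothing to prove. So suppose instead that $L(a)<c(1-\varepsilon)/2\pi^2$. Unpacking the definition
\[
L(a)=\inf\left\{c':|a-n/m|<c'/m^{2}\text{ for some }m,n\in\mathbb{Z}\right\},
\]
a strict inequality $L(a)<c(1-\varepsilon)/2\pi^{2}$ means that $c(1-\varepsilon)/2\pi^{2}$ is not a lower bound for this set; hence there exist $n,m\in\mathbb{N}$ with
\[
\left|a-\frac{n}{m}\right|<\frac{c(1-\varepsilon)}{2\pi^{2}m^{2}},
\]
i.e., a witness of $L(a)\leq c(1-\varepsilon)/2\pi^{2}$ in the sense used by the preceding lemma.

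It remains to bound the denominator of this witness. I claim $m<M=T/2\pi$. Otherwise $m\geq M$, and the preceding lemma would yield a zero of $f$ at some time $t\geq 2\pi m\geq 2\pi M=T$, contradicting the standing assumption. This gives the first alternative of the conclusion, completing the proof.

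There is essentially no obstacle here: the only minor point worth flagging is the elementary observation that $\inf S<\alpha$ implies the existence of an actual element of $S$ strictly below $\alpha$, which is what converts the strict inequality $L(a)<c(1-\varepsilon)/2\pi^{2}$ into a concrete witness $(n,m)$ to which the preceding lemma can be applied.
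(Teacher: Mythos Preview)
Your proof is correct and is exactly the approach the paper intends: the paper states that the lemma ``is immediate'' from the preceding one, and your contrapositive argument with $T:=2\pi M$ is precisely that derivation. The only tiny inaccuracy is that the preceding lemma, as stated, guarantees a zero at some $t\geq 2\pi M$ rather than $t\geq 2\pi m$; but since you only need $t\geq T=2\pi M$ for the contradiction, this does not affect the argument.
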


We now use the above lemmas to show the central result of
this section:

\begin{theorem}
Fix a positive real algebraic number $a$. If the 
Continuous Skolem Problem is decidable 
then $L(a)$ may be computed to within arbitrary precision.
\label{thm:hard}
\end{theorem}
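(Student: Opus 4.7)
My plan is to use the oracle for the Continuous Skolem Problem, via Lemmas~\ref{hardForward} and~\ref{hardBackward}, as the comparator in a binary search for $L(a)$.

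First, I would observe that $f=\min\{f_1,f_2\}$ has a zero in a half-line $[T,\infty)$ with rational $T$ if and only if at least one of the exponential polynomials $f_1,f_2$ does. When $c$ is chosen to be a positive rational, $f_1$ and $f_2$ have real algebraic coefficients and real algebraic frequencies, so the assumed oracle decides each of the two half-line membership questions and hence zero existence for $f$ on $[T,\infty)$. The effective thresholds produced by the lemmas may be irrational, but rounding them up to the nearest integer only weakens the hypothesis of Lemma~\ref{hardForward} (a zero at some $t\geq T''\geq T$ still lies in $[T,\infty)$), and the proof of Lemma~\ref{hardBackward} goes through with any $M\geq M_0$ in place of the fixed threshold $M_0$, so we may assume $T\in\mathbb{N}$.

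Next, fix a rational $c>0$ and a rational $\varepsilon\in(0,1)$ and let $T\in\mathbb{N}$ dominate the thresholds of both lemmas. If the oracle reports a zero of $f$ in $[T,\infty)$, Lemma~\ref{hardForward} yields $L(a)\leq c/(2\pi^2(1-\varepsilon))$. If no zero exists, I would enumerate the finitely many $m\in\mathbb{N}$ with $m<T/(2\pi)$ and, using computability of $a$ to arbitrary precision, test whether the closest integer $n$ satisfies $|a-n/m|<c(1-\varepsilon)/(2\pi^2 m^2)$: a positive answer yields $L(a)<c(1-\varepsilon)/(2\pi^2)$, while the absence of such a small witness forces $L(a)\geq c(1-\varepsilon)/(2\pi^2)$ by Lemma~\ref{hardBackward}. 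In every case the comparator concludes either $L(a)\leq c/(2\pi^2(1-\varepsilon))$ or $L(a)\geq c(1-\varepsilon)/(2\pi^2)$.

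Finally, I would drive a binary search. Dirichlet's approximation theorem yields the initial enclosing interval $L(a)\in[0,1]$. At each iteration, with current interval $[\ell,u]$, I pick a rational $c>0$ such that $c/(2\pi^2)$ lies close to the midpoint $(\ell+u)/2$ (using rational approximations of $\pi$), and then a rational $\varepsilon$ small enough that both $c(1-\varepsilon)/(2\pi^2)$ and $c/(2\pi^2(1-\varepsilon))$ fall inside the middle half of $[\ell,u]$. The comparator then shrinks the interval by at least a factor of $3/4$. After $O(\log(1/\eta))$ iterations the width drops below the prescribed precision $\eta$, yielding an approximation of $L(a)$ within $\eta$.

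The delicate step is the calibration of $c$ and $\varepsilon$ at each iteration: $\varepsilon$ must be chosen small enough that the ``ambiguity interval'' $[c(1-\varepsilon)/(2\pi^2),\,c/(2\pi^2(1-\varepsilon))]$ lies strictly inside the current search interval, so that every comparator call makes genuine progress. Since both endpoints of the ambiguity interval tend to $c/(2\pi^2)$ as $\varepsilon\to 0$, a suitable $\varepsilon$ can always be found, and no further obstacles arise.
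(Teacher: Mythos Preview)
Your proof is correct and follows essentially the same approach as the paper's: a bisection driven by Lemmas~\ref{hardForward} and~\ref{hardBackward}, with the oracle deciding whether $f$ has a zero on the relevant half-line. The only cosmetic difference is that you query the oracle before scanning the small denominators, whereas the paper does it in the opposite order; both orderings are sound, and your added remarks (the Dirichlet bound for the initial interval, rounding the thresholds to integers, and the explicit $3/4$ shrinkage factor) simply make explicit what the paper leaves implicit.
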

\begin{proof}
Suppose we know $L(a)\in[p, q]$ for non-negative
$p,q\in\mathbb{Q}$. Choose a real algebraic $c$ with $c>0$ and a rational
$\varepsilon\in(0,1)$ such that
\[
p < \frac{c(1-\varepsilon)}{2\pi^2} < \frac{c}{2\pi^2(1-\varepsilon)}
< q.
\]
Write $A \defn c(1-\varepsilon)/2\pi^2$ and $B \defn c/2\pi^2(1-\varepsilon)$.
Calculate the maximum of the thresholds $T$ required by Lemmas
\ref{hardForward} and \ref{hardBackward}.  Check for all denominators
$m\leq T/2\pi$ whether there exists a numerator $n$ such that $n, m$
witness $L(a) \leq A$.  If so, then continue the approximation procedure
recursively with confidence interval $[p, A]$.  Otherwise, use the
oracle for the Continuous Skolem Problem to determine whether at least
one of $f_1(t), f_2(t)$ has a zero on $[T, \infty)$. If this is the
case, then $f(t)$ also has a zero on $[T, \infty)$, so by Lemma
\ref{hardForward}, $L(a)\leq B$ and we continue the
approximation recursively on the interval $[p, B]$.  If not,
then $L(a)\geq A$ by Lemma \ref{hardBackward}, so we continue on
the interval $[A, q]$. Notice that in this procedure, one can
choose $c, \varepsilon$ at each stage in such a way that the confidence
interval shrinks by at least a fixed factor,
whatever the outcome of the oracle invocations. It follows
therefore that $L(a)$ can be approximated to within arbitrary
precision.
\end{proof}

We conclude by remarking that the exponential polynomials $f_1$ and
$f_2$ involved in the proof of Theorem~\ref{thm:hard} involve only two
rationally linearly independent frequencies.  Thus the theorem applies
as soon as we have a decision procedure for exponential polynomials
with two rationally linear independent frequencies.

\section{Proof of Lemma~\ref{lem:combine}}
\label{sec:missing}
We divide Lemma~\ref{lem:combine} into two separate results, which are
proven below.
\begin{proposition}
  There is a procedure that, given a semi-algebraic set
  $D\subseteq \mathbb{R}^{n+1}$ and real algebraic numbers
  $r_1,\ldots,r_n$, returns an integer $T$ such that
  $\{ t \geq 0 : (t,e^{r_1t},\ldots,e^{r_nt}) \in D \}$ either contains
  the interval $(T,\infty)$ or is disjoint from $(T,\infty)$.  The
  procedure also decides which of these two eventualities is the case.
\label{prop:o-minimal}
\end{proposition}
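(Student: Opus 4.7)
The plan is to reduce membership of the curve $\gamma(t) := (t, e^{r_1 t}, \ldots, e^{r_n t})$ in $D$ to the eventual sign of a finite collection of generalized exponential polynomials of $t$, and then determine each such sign by identifying a dominant term.

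First I would put $D$ into quantifier-free form (by Tarski--Seidenberg) as a Boolean combination of polynomial inequalities $P_i(x_0, x_1, \ldots, x_n) \gtrless 0$ with real algebraic coefficients. Substituting $x_0 = t$ and $x_j = e^{r_j t}$ and expanding, each resulting expression becomes a finite sum
\[
g_i(t) \;=\; \sum_{\alpha} c_{i,\alpha}\, t^{\alpha_0} \exp\!\left(\mu_{i,\alpha}\, t\right),\qquad \mu_{i,\alpha} := \sum_{j=1}^n \alpha_j r_j,
\]
with only finitely many non-zero real algebraic coefficients $c_{i,\alpha}$. Thus the predicate ``$\gamma(t) \in D$'' is a fixed Boolean combination of sign predicates $g_i(t) \gtrless 0$, and it suffices to show the proposition for each $g_i$ separately, since the eventual sign of a Boolean combination is determined by the eventual signs of its components (and the worst of the thresholds will serve).

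Next I would determine the eventual sign of a single $g_i$ by picking out a dominant term. Order pairs $(\mu_{i,\alpha}, \alpha_0) \in \mathbb{R} \times \mathbb{N}$ lexicographically (by $\mu$ first, then by polynomial degree). Among indices $\alpha$ with $c_{i,\alpha} \neq 0$, let $\alpha^*$ be the lex-maximum; then for large $t$, $g_i(t)$ has the same sign as $c_{i,\alpha^*}$. Both ingredients here are effective: the $\mu_{i,\alpha}$ are integer linear combinations of the algebraic numbers $r_1, \ldots, r_n$, so can be compared exactly using the machinery of Section~\ref{sec:number-theory}; and the test $c_{i,\alpha} \neq 0$ together with its sign are standard decidable operations on real algebraic numbers.

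Finally I would extract an explicit threshold $T_i$. Writing $g_i(t) = c_{i,\alpha^*} t^{\alpha^*_0} e^{\mu^* t} + R(t)$, the remainder $R(t)$ is a finite sum of terms of strictly smaller lex-exponent; a crude bound of the form $|R(t)| \leq C t^{D} e^{\nu t}$ for effective $C, D, \nu$ with either $\nu < \mu^*$, or $\nu = \mu^*$ and $D < \alpha^*_0$, yields by elementary manipulation an explicit $T_i$ beyond which $|R(t)| < \tfrac{1}{2}|c_{i,\alpha^*}| t^{\alpha^*_0} e^{\mu^* t}$, pinning down the sign of $g_i(t)$ on $(T_i, \infty)$. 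Setting $T := \max_i T_i$ and evaluating the Boolean combination at the resulting eventual signs decides which of the two alternatives holds and certifies it. The only mildly delicate point is the bookkeeping needed to turn the asymptotic dominance into an explicit numerical $T_i$, but this is a routine calculation once the dominant term has been identified; the substantive content is that the curve $\gamma$ has coordinates with exponential rates in a finitely generated $\mathbb{Q}$-structure, which lets asymptotic comparisons be carried out effectively.
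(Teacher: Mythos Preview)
Your proposal is correct and follows essentially the same approach as the paper: both reduce the question to the eventual sign of exponential polynomials $\sum_j Q_j(t)e^{\beta_j t}$ obtained by substituting $(t,e^{r_1t},\ldots,e^{r_nt})$ into the defining polynomials of $D$, and both determine that sign via the dominant term (the paper groups by exponential rate and looks at the leading polynomial $Q_1$, while you equivalently use the lexicographic maximum on $(\mu,\alpha_0)$). Your version is simply more explicit about the Boolean-combination bookkeeping and the extraction of an effective threshold, which the paper leaves implicit in the phrase ``the proposition easily follows.''
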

\begin{proof}
  Consider a non-zero polynomial $P \in \mathbb{R}[x_0,\ldots,x_n]$ whose
  coefficients are real algebraic numbers.  Then we can write 
  $P(t,e^{r_1t},\ldots,e^{r_nt})$ in the form
  \[ Q_1(t)e^{\beta_1 t} + \ldots + Q_{m}(t) e^{\beta_{m}t} \]
  for non-zero univariate polynomials $Q_1,\ldots,Q_{m}$ 
with real algebraic coefficients and real
  algebraic numbers $\beta_1>\ldots >\beta_{m}$.  It is clear that
  for $t$ sufficiently large, $P(t,e^{r_1t},\ldots,a^{r_nt})$ has the same
  sign as the leading term $Q_1(t)$.  The proposition easily
  follows.
\end{proof}

\begin{lemma}
  Let $g : D \rightarrow \mathbb{R}$ be a bounded 
  semi-algebraic function with domain $D \subseteq \mathbb{R}^n$.  Let
  $\vr=(r_1,\ldots,r_n)$ be a tuple of real algebraic numbers and
  $T_1$ an integer such that
  $e^{\vr t} = (e^{r_1t},\ldots,e^{r_nt}) \in D$ for all $t>T_1$.
  Then the limit $g^* = \lim_{t\rightarrow \infty} g(e^{\vr t})$
  exists, is an algebraic number, and there are effective constants
  $T_2,\varepsilon>0$ such that
  $|g(e^{\vr t})- g^*| < e^{-\varepsilon t}$ for all $t>T_2$.
\label{lem:convergence}
\end{lemma}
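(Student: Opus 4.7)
The plan is, in broad strokes, to use cell decomposition to reduce the analysis of $g(e^{\vr t})$ for large $t$ to a single continuous semi-algebraic branch $\phi$, then to exhibit $\phi(e^{\vr t})$ as an algebraic function of $e^{r_1 t},\ldots,e^{r_n t}$ and to extract its asymptotics via a Newton-polygon argument.

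First, I would apply the Cell Decomposition Theorem to the graph of $g$ inside $\mathbb{R}^{n+1}$, obtaining a partition into finitely many cells compatible with the graph. Each cell lying inside the graph is necessarily of the form $C_j=\{(\vx,\phi_j(\vx)):\vx\in D_j\}$ for a cell $D_j\subseteq\mathbb{R}^n$ and a continuous semi-algebraic function $\phi_j$. A direct variant of Proposition~\ref{prop:o-minimal} (whose leading-exponential-term argument applies equally well without a distinguished $t$-coordinate) shows that for each $j$ the set $\{t>T_1:e^{\vr t}\in D_j\}$ either eventually contains a tail $(T,\infty)$ or is eventually empty; since the $D_j$ partition $D$, exactly one index $j_0$ exhibits the tail behaviour. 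Writing $\phi:=\phi_{j_0}$, we have $g(e^{\vr t})=\phi(e^{\vr t})$ past some effective threshold, so it suffices to analyse $h(t):=\phi(e^{\vr t})$.

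Second, since $\phi$ is semi-algebraic there is a nonzero polynomial $P(\vx,y)$ with real algebraic coefficients satisfying $P(\vx,\phi(\vx))=0$ identically on $D_{j_0}$; by refining the cell decomposition if necessary, I may assume the $y$-discriminant of $P$ does not vanish on $D_{j_0}$, so that $\phi(\vx)$ is isolated as a simple root of $P(\vx,\cdot)$. Writing $P(\vx,y)=\sum_{k=0}^K Q_k(\vx)y^k$ and substituting $\vx=e^{\vr t}$, each $Q_k(e^{\vr t})$ becomes an exponential polynomial $\sum_\ell c_{k,\ell}e^{\mu_{k,\ell} t}$ with algebraic coefficients $c_{k,\ell}$ and real algebraic exponents $\mu_{k,\ell}$ drawn from the $\mathbb{Z}$-module generated by $r_1,\ldots,r_n$. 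Thus $h(t)$ is a root of a polynomial equation in $y$ whose coefficients are exponential polynomials with effectively known leading growth rates. A Newton-polygon analysis, using the asymptotic exponential growth rate at $t\to\infty$ as the relevant valuation, iteratively peels off leading terms of $h(t)$ and yields an asymptotic expansion $h(t)=\gamma_0 e^{\sigma_0 t}+\gamma_1 e^{\sigma_1 t}+\cdots$ with algebraic coefficients $\gamma_i$ and a strictly decreasing sequence $\sigma_0>\sigma_1>\cdots$ of real algebraic exponents. Boundedness of $g$ forces $\sigma_0\leq 0$: if $\sigma_0<0$ then $g^*=0$ and $|h(t)|\lesssim e^{\sigma_0 t}$, while if $\sigma_0=0$ then $g^*=\gamma_0$ and $|h(t)-g^*|\lesssim e^{\sigma_1 t}$; in either case $g^*$ is algebraic and the exponential bound $|h(t)-g^*|<e^{-\varepsilon t}$ holds for any $\varepsilon$ strictly less than $-\sigma_0$, respectively $-\sigma_1$, past an effective threshold $T_2$.

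The main obstacle will be executing the Newton-polygon step in this non-standard setting: the base ring is the ring of real-algebraic exponential polynomials in $t$ rather than a complete local field, and the role of the valuation is played by asymptotic exponential growth at infinity. Classical Newton-Puiseux must be adapted to verify that each peel-off iteration produces a residual polynomial equation (with algebraic coefficients) whose root gives the next coefficient $\gamma_{i+1}$, that the remainder continues to be an exponential polynomial with strictly smaller leading exponent, and that every threshold and error constant appearing in the process is effectively computable from the original data.
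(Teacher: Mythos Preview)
Your approach is essentially correct but considerably more elaborate than necessary.  The paper's proof dispenses entirely with cell decomposition and with the Newton--Puiseux machinery.  It simply uses the fact that any semi-algebraic function $g$ satisfies a nontrivial polynomial relation $P(\vx,g(\vx))=0$ over real algebraic coefficients (no continuity or simple-root hypothesis needed), substitutes $\vx=e^{\vr t}$, and collects like exponentials to obtain
\[
Q_1(g(e^{\vr t}))\,e^{\beta_1 t}+\cdots+Q_m(g(e^{\vr t}))\,e^{\beta_m t}=0
\]
with $\beta_1>\cdots>\beta_m$ real algebraic and $Q_1,\ldots,Q_m$ univariate polynomials.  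Dividing by $e^{\beta_1 t}$ and using boundedness of $g$ gives $|Q_1(g(e^{\vr t}))|\leq Me^{(\beta_2-\beta_1)t}$, and since $|Q_1(y)|\geq |a|\cdot\min_j|y-\alpha_j|^d$ (where $a$ is the leading coefficient, $\alpha_j$ the roots, and $d$ the degree), $g(e^{\vr t})$ must lie within $e^{-\varepsilon t}$ of \emph{some} root of $Q_1$ for effective $\varepsilon>0$.  An appeal to Proposition~\ref{prop:o-minimal} then pins down which root.

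So the paper essentially performs only the \emph{first} Newton-polygon step---identifying $g^*$ as a root of the leading coefficient polynomial $Q_1$---and stops there, since that already delivers algebraicity of $g^*$ and exponential convergence.  Your iterated expansion would give finer asymptotics but is not needed for the stated lemma, and the preliminary cell-decomposition reduction to a continuous branch is likewise unnecessary because the polynomial annihilator exists without it.
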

\begin{proof}
  Since $g$ is semi-algebraic, there is a non-zero polynomial $P$
  with real algebraic coefficients such that $P(\vx,g(\vx))=0$ for
  all $\vx \in D$ (see~\cite[Proposition 2.86]{BasuPR06}).  In particular,
  we have $P(e^{\vr t},g(e^{\vr t}))=0$ for all $t>T_1$.  Gathering
  terms, we can rewrite this equation in the form
\[ Q_1(g(e^{\vr t}))e^{\beta_1 t} +
\ldots + Q_{m}(g(e^{\vr t}))e^{\beta_{m} t} = 0\] for non-zero univariate
polynomials $Q_1,\ldots,Q_{m}$ with real-algebraic coefficients
and real algebraic numbers
$\beta_1>\ldots >\beta_m$.

If $m=1$ then for all $t>T_1$ we have $Q_1(g(e^{\vr t}))=0$.  Thus
$g(e^{\vr t})$ is equal to some of root of $Q$ for all $t>T_1$.  Then
by Proposition~\ref{prop:o-minimal} there exists $T_2$ such that
$g(e^{\vr t})=g^*$ for some fixed root $g^*$ of $Q_1$ and all $t>T_2$.

If $m>1$, since $g$ is a bounded function, for all $t>T_1$ we have 
\begin{eqnarray} 
\big|Q_1(g(e^{\vr t}))\big|  &=& \left|Q_2(g(e^{\vr t})) e^{(\beta_2-\beta_1)t} + \ldots +  
Q_m(g(e^{\vr t})) e^{(\beta_m-\beta_1)t} \right| \notag \\
& \leq & M e^{(\beta_2-\beta_1)t} \label{eq:exp-bound}
\end{eqnarray}
for some constant $M$.  If $Q_1$ has degree
$d$ then (\ref{eq:exp-bound}) implies that the closest root of
$Q_1$ to $g(e^{\vr
  t})$ has distance at most
$(Me^{\beta_2-\beta_1})^{1/d}$.  Hence there exists a root
$g^*$ of $Q_1$ and effective constants
$\varepsilon,T_2>0$ such that
$|g(e^{\vr t})-g^*| < e^{-\varepsilon t}$ for all $t>T_2$.  
\end{proof}

\section{One Linearly Independent Frequency}
\label{sec:one-osc}

\begin{theorem}\label{thm:mtargument}
  Let
  $f(t)=\sum_{j=1}^n
  e^{r_jt}\left(P_{1,j}(t)\cos(\omega_jt)+P_{2,j}\sin(\omega_j)\right)$
  be an exponential polynomial such that $\omega_1,\ldots,\omega_n$
  are all integer multiples of single real algebraic number $b$.  Then we
  can decide whether or not $\{t\in \mathbb{R}_{\geq 0}:f(t)=0\}$ is
  bounded and, if bounded, we can compute an integer $T$ such that
  $\{t\in \mathbb{R}_{\geq 0}:f(t)=0\}\subseteq [0,T]$.
\end{theorem}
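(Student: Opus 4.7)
The strategy mirrors the proof of Theorem~\ref{thm:mtargument2}, but the single-frequency setting is in fact simpler in that no appeal to Baker's theorem is required.  Writing each frequency as $\omega_j = n_j b$ with $n_j \in \mathbb{Z}$ and applying the Chebyshev identities $\cos(n_j bt) = T_{n_j}(\cos bt)$ and $\sin(n_j bt) = \sin(bt)\,U_{n_j-1}(\cos bt)$, we rewrite
\[ f(t) = Q(t, e^{r_1 t}, \ldots, e^{r_n t}, \cos bt, \sin bt) \]
for some polynomial $Q$ with real algebraic coefficients (the degenerate case $b=0$ reduces $f$ to a real exponential polynomial and is easily handled).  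Partition $[0,\infty) = R_+ \cup R_-$ according to the sign of $\sin bt$, and treat $R_+$ in detail; the case of $R_-$ is symmetric.  Define the semi-algebraic set
\[ E = \bigl\{(\tau,\boldsymbol{u},x) \in \mathbb{R}_{\geq 0}\times\mathbb{R}^n\times[-1,1] : \exists y\geq 0\,\bigl(x^2+y^2 = 1 \,\wedge\, Q(\tau,\boldsymbol{u},x,y)=0\bigr)\bigr\}, \]
so that for $t \in R_+$ one has $f(t)=0$ iff $(t,e^{\vr t},\cos bt) \in E$.  Compute a cell decomposition $E = C_1 \cup \cdots \cup C_m$; for each cell $C_j$ let $D_j \subseteq \mathbb{R}^{n+1}$ be its projection onto the first $n+1$ coordinates, and set $Z_j = \{t\in R_+ : (t,e^{\vr t},\cos bt)\in C_j\}$ and $S_j = \{t\geq 0 : (t,e^{\vr t}) \in D_j\}$.

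The central claim is that $Z_j$ is bounded if and only if $S_j$ is bounded, so the problem reduces to deciding boundedness of each $S_j$ via Proposition~\ref{prop:o-minimal}.  One direction follows from $Z_j \subseteq S_j$.  For the converse, suppose $(T_0,\infty) \subseteq S_j$ and define the semi-algebraic boundary functions $g(\tau,\boldsymbol{u}) = \inf\{x : (\tau,\boldsymbol{u},x)\in C_j\}$ and $h(\tau,\boldsymbol{u}) = \sup\{x : (\tau,\boldsymbol{u},x)\in C_j\}$, both bounded in $[-1,1]$.  Using an extension of Lemma~\ref{lem:combine}(ii) that permits $\tau = t$ as an additional polynomial coordinate, the limits $g^* = \lim_{t \to \infty} g(t,e^{\vr t})$ and $h^* = \lim_{t \to \infty} h(t,e^{\vr t})$ exist, are algebraic numbers in $[-1,1]$, and are approached at an exponential rate.

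Unboundedness of $Z_j$ is then established by case analysis.  If $g^* < h^*$, for $t$ sufficiently large the slice $(g(t,e^{\vr t}),h(t,e^{\vr t}))$ contains a fixed open interval about $(g^*+h^*)/2$; since $b$ is algebraic and nonzero and $\pi$ is transcendental, $b/(2\pi)$ is irrational, so by Weyl's equidistribution the set $\{bt \bmod 2\pi : t \geq 0\}$ is dense in $[0,2\pi)$, yielding arbitrarily large $t \in R_+$ with $\cos bt$ in this fixed interval.  If $g^* = h^*$ with common value $\xi^*$, the slice collapses to $\{\xi^*\}$ at rate $O(e^{-\varepsilon t})$.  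When $|\xi^*|<1$, writing $\xi^* = \cos\varphi$ with $\varphi\in(0,\pi)$, the function $\cos bt$ crosses $\xi^*$ transversally (slope $\pm b\sin\varphi \neq 0$) at $t = (\pm\varphi + 2k\pi)/b$, so by a first-order expansion $\cos bt$ traverses the shrinking slice on a $t$-sub-interval of length comparable to $(h-g)/(b|\sin\varphi|)$ per period.  When $\xi^* = \pm 1$, the contact is tangential with $1-|\cos bt| \approx b^2(t-t_k)^2/2$ near $t_k$ (where $t_k = 2k\pi/b$ or $(2k{+}1)\pi/b$); solving the slice inclusion yields a non-empty $t$-annulus about each $t_k$ of width $O(e^{-\varepsilon t_k/2})$, one side of which lies in $R_+$.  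In every sub-case $Z_j$ contains points in each of infinitely many periods and is therefore unbounded.

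Combining over all cells from both the $R_+$ and $R_-$ decompositions, and using Proposition~\ref{prop:o-minimal} to extract a computable bound on each $S_j$ in the bounded case, yields a decision procedure for boundedness of $\{t\geq 0 : f(t)=0\}$ together with an integer bound $T$ when the zero set is bounded.  The principal technical obstacle is the $\tau$-extended version of Lemma~\ref{lem:combine}(ii) invoked above: the proof of Lemma~\ref{lem:convergence} expresses the defining relation $P(\boldsymbol{u},g)=0$ as $\sum_i Q_i(g) e^{\beta_i t} = 0$ with univariate $Q_i$ and compares the leading $|Q_1(g)|$ against an exponentially small remainder.  Allowing $\tau = t$ to appear polynomially makes each $Q_i(t,g)$ bivariate, and one must show that any bounded branch of the algebraic curve $Q_1(t,y)=0$ converges as $t\to\infty$ to a root of the leading-in-$t$ polynomial $\widetilde{Q}(y) := \lim_{t\to\infty} Q_1(t,y)/t^{\deg_t Q_1}$, with the exponential convergence rate inherited from balancing the lower-order exponential terms.
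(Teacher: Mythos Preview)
Your overall strategy---rewrite $f$ via Chebyshev, split on the sign of $\sin(bt)$, pass to the semi-algebraic set $E$, cell-decompose, and reduce boundedness of each $Z_j$ to boundedness of $S_j$ via Proposition~\ref{prop:o-minimal}---matches the paper exactly. Where you diverge is in the proof that $S_j$ unbounded implies $Z_j$ unbounded, and here you work much harder than necessary.

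The paper does not take limits $g^*, h^*$, does not need any extension of Lemma~\ref{lem:combine}, and does not invoke equidistribution or any transversality/tangency analysis. Instead it exploits the cell structure directly. If $C_j$ is a $(\ldots,0)$-cell then by definition $C_j = \{(\boldsymbol{u}, g(\boldsymbol{u})) : \boldsymbol{u} \in D_j\}$ for a \emph{continuous} semi-algebraic $g$, necessarily with values in $[-1,1]$ since $E$ forces $|x| \leq 1$. For $t \in R_+ \cap (T_0, \infty)$, membership $t \in Z_j$ is exactly the equation $\cos(bt) = g(t, e^{\vr t})$. Each connected component of $R_+$ is an interval $[2k\pi/b, (2k{+}1)\pi/b]$ on which $\cos(bt)$ sweeps through all of $[-1,1]$; since $g(t, e^{\vr t})$ is continuous with values in $[-1,1]$, the intermediate value theorem yields an intersection in every such component past $T_0$. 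If $C_j$ is a $(\ldots,1)$-cell, it contains the graph of a continuous semi-algebraic function (e.g., the midpoint of its two bounding functions), and the same one-line IVT argument applies.

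This shortcut eliminates precisely the ``principal technical obstacle'' you flag: no limits are taken, so there is nothing to extend in Lemma~\ref{lem:combine}(ii). Your route can be completed---the $g^* < h^*$ case is sound (though Weyl equidistribution is overkill; mere periodicity of $\cos$ suffices), and the $g^* = h^*$ subcases can be pushed through with additional care---but it is a substantially longer path, with an acknowledged gap, to a conclusion that follows from one application of IVT per period.
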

\begin{proof}
  Recall that for each integer $n$, both $\cos(nbt)$
  and $\sin(nbt)$ can be written as polynomials in $\sin(bt)$ and
  $\cos(bt)$ with integer coefficients.  It follows that we can write
  $f$ in the form
\[ f(t) = Q(t,e^{r_1t},\ldots,e^{r_nt},\cos(bt),\sin(bt)) \, , \]
for some multivariate polynomial $Q$ with real algebraic coefficients
that is computable from $f$.

Write $\mathbb{R}_{\geq 0}$ as the union of
$R_{+} := \{ t\in\mathbb{R}_{\geq 0}: \sin(bt)\geq 0\}$ and
$R_{-}:=\{t\in\mathbb{R}_{\geq 0} : \sin(bt)\leq 0 \}$.  We show how
to decide boundedness of $\{t\in R_{+}: f(t)=0\}$.  The analogous case
for $R_{-}$ follows \emph{mutatis mutandis}.  The idea is to partition
$\{t\in R_{+}:f(t)=0\}$ into components $Z_1,\ldots,Z_m$ and to show
for how to decide boundedness of each $Z_j$.

To this end define a semi-algebraic set 
\[
E := \left\{ (\boldsymbol{u},x) \in \mathbb{R}^{n+2} : \exists y 
\geq 0 \left(x^2+y^2=1 \wedge Q(\boldsymbol{u},x,y)=0\right) \right\} \, .
\]
Then for all $t\in R_{+}$  we have $f(t)=0$ if and only if
$(t,e^{\va t},\cos(bt))\in E$.

Consider a cell decomposition $E =C_1 \cup \ldots \cup C_m$, for cells
$C_1,\ldots,C_m \subseteq \mathbb{R}^{n+2}$, and define
\[ Z_j = \{ t\in R_{+} : \left(t,e^{\vr t},\cos(bt)\right) \in C_j \}
\]
for $j=1,\ldots,m$.  Then
$\{t\in R_{+} : f(t)=0\} = Z_1\cup\ldots\cup Z_m$.

We now focus on $Z_j$ for some fixed $j\in\{1,\ldots, m\}$ and show how
to decide whether or not $Z_j$ is finite.  Write
$D_j \subseteq \mathbb{R}^{n+1}$ for the projection of the corresponding
cell $C_j \subseteq \mathbb{R}^{n+2}$ onto the first $n+1$
coordinates.  Then $Z_j \subseteq \{ t \in \mathbb{R}_{\geq 0} :
(t,e^{\vr t}) \in D_j\}$.

First, suppose that
$\{ t \in \mathbb{R}_{\geq 0} : (t,e^{\vr t}) \in D_j \}$ is bounded.
Since $D_j$ is semi-algebraic, by Proposition~\ref{prop:o-minimal} we
can compute an upper bound $T$ of this set.  In this case $t < T$
whenever $t\in Z_j$.

On the other hand, suppose that
$\{ t \in \mathbb{R} : (t,e^{\vr t}) \in D_j \}$ is unbounded.  Then, by
Proposition~\ref{prop:o-minimal}, this set contains an unbounded
interval $(T,\infty)$.  We claim that in this case $Z_j$ must be
unbounded.  
There are two cases according to the nature of the cell
$C_j$.

\subsection*{Case I: $C_j$ is a $(\ldots,0)$-cell.}
In this case there is a continuous semi-algebraic
function $g : D_j \rightarrow \mathbb{R}$ such that
$C_j = \{ (\boldsymbol{u},g(\boldsymbol{u})) : \boldsymbol{u}\in D_j\}$.
Then for $t \in R_{+} \cap (T,\infty)$,
\begin{eqnarray*}
g(t,e^{\vr t})=\cos(bt) & \Longleftrightarrow & 
(t,e^{\vr t},\cos(bt)) \in C_j\\
& \Longleftrightarrow & f(t)=0 \, .
\end{eqnarray*}
In other words, $f$ has a zero at each point $t\in R_{+}\cap(T,\infty)$ at
which the graph of $g(t,e^{\vr t})$ intersects the graph of
$\cos(bt)$.  Since $g$ is a continuous function with an unbounded
domain that takes values in $[-1,1]$, there are infinitely
many such intersection points--see Figure~\ref{fig:intersection}.

\subsection*{Case II: $C_j$ is a $(\ldots,1)$-cell.}
In this case $C_j$ contains some $(\ldots,0)$-cell and so the argument
in Case I shows that $f$ has infinitely many zeros in $Z_j$.
\end{proof}
 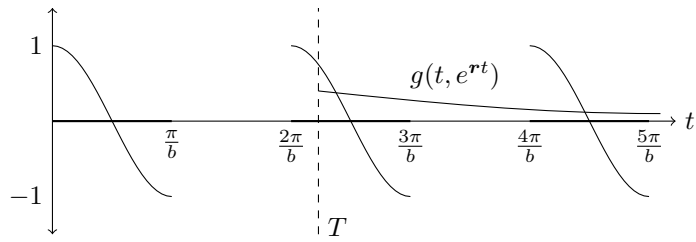
\begin{figure}
 \begin{center}
 \begin{tikzpicture}
   \draw[->] (0,0) -- (8.2,0) node[right] {$t$};
   \draw[<->] (0,-1.5) -- (0,1.5) ;
   \draw (0,1) node[left] {$1$};
   \draw (0,-1) node[left] {$-1$};

\draw (1.57,0) node[below] {$\frac{\pi}{b}$};
\draw (3.14,0) node[below] {$\frac{2\pi}{b}$};
\draw (4.71,0) node[below] {$\frac{3\pi}{b}$};
\draw (6.28,0) node[below] {$\frac{4\pi}{b}$};
\draw (7.85,0) node[below] {$\frac{5\pi}{b}$};

\draw[thick] (0,0)--(1.57,0);
\draw[thick] (3.14,0)--(4.71,0);
\draw[thick] (6.28,0)--(7.85,0);

  \draw plot[domain=0:1.57] (\x,{cos(2 * \x r)});

 \draw plot[domain=0:1.57] (\x+3.14,{cos(2 * \x r)});

 \draw plot[domain=0:1.57] (\x+6.28,{cos(2 * \x r)});

   \draw (3.5,-1.4) node[right] {$T$};
   \draw (3.5,0.4) sin (8,0.1);
 \draw[dashed] (3.5,-1.5) -- (3.5,1.5);

   \draw (5.3,0.3) node[above] {$g(t,e^{\vr t})$};
 \end{tikzpicture}
 \end{center}
 \caption{Intersection points of $g(t,e^{\vr t})$ and  $\cos(bt)$ for 
$t\in R_{+}$ (with $R_{+}$ shown in bold).}
 \label{fig:intersection}
 \end{figure}

\end{document}